%
%
%
%
%
%
%
\documentclass[reprint,aps,pra,]{revtex4-2}

\usepackage{graphicx}
\graphicspath{ {./img/} }
\usepackage{dcolumn}
\usepackage{tikz}
\usepackage{taisanul}
\usepackage{braket}
\usepackage{bm}
\usepackage{hyperref}


\begin{document}

\preprint{APS/123-QED}

\title{Aspects of Quantum Energy Teleportation}

\author{Taisanul Haque}
    \email{taisanul.haque@stud.uni-goettingen.de}

\affiliation{%
 Institute of Theoretical Physics, University of Goettingen
}%

\date{\today}

\begin{abstract}
In this work, we explore quantum energy teleportation (QET) protocols, focusing on their behavior at finite temperatures , in ground and excited states. We analyze the role of entanglement as a resource for QET, particularly in thermal states, and compare the performance of QET across these initial states. We then introduce a method to extract ground-state energy through a protocol that employs only quantum measurements, local operations, and classical communication (LOCC), without requiring the ground state to be quantum correlated either through entanglement or quantum discord. To illustrate this, we propose a minimal model comprising two interacting qubits. These findings indicate that, in addition to the established QET framework where quantum correlation serves as a resource, it is possible to extract energy from an product ground state. This broadens the scope of QET's applicability across diverse quantum systems.
\end{abstract}

\maketitle


\section{\label{sec:level1}Introduction}
In the seminal paper \cite{PhysRevLett.70.1895} by Bennett et al. present the first theoretical proposal for quantum information teleportation (QIT), where an unknown quantum state is transferred from one location to another using local operation and classical communication(LOCC), and previously shared entanglement (Einstein-Podolsky-Rosen channels). Nevertheless, it was demonstrated in \cite{hotta2011quantumenergyteleportationintroductory} that the QIT alone is inadequate for transferring energy. In fact, it has been shown\cite{wang2024quantumenergyteleportationversus} that there is a trade off relationship between QET and QIT.
The Quantum Energy Teleportation (in short QET) protocol originally proposed by  Masahiro Hotta \cite{Hotta_2008,Hotta_2009,Hotta_2010} is fundamentally different from classical energy transport methods. It utilizes zero-point \textcolor{black}{or ground state energy}, which is the lowest possible energy that a quantum mechanical system may have. Unlike classical systems, where zero energy signifies an absence of motion, quantum systems maintain non-zero energy due to zero-point fluctuations \textcolor{black}{because of non-commutativity of quantum variables}. These fluctuations are crucial \textcolor{black}{\cite{PhysRevA.89.012311, PhysRevD.78.045006}} for QET protocol as they represent the potential energy that can be teleported. \textcolor{black}{The key point in QET is that the ground state energy is protected from direct exploitation in the sense that one cannot (say, Alice) extract energy using direct measurement or local unitary operations. The QET protocol requires a special class of measurement operators that only locally inject energy at Alice's site and do not disturb the other region (say, Bob's site), where energy is to be extracted. After the measurements, the whole system is in an excited state, but the energy density should remain unaffected at Bob's site. Thus, in QET, there should exist LOCC, which enables lowering the local energy density at Bob's site, i.e., the local energy density is negative, although the overall energy density remains non-negative. This is allowed in quantum systems \cite{Hotta_2009}, and we refer to this as local zero-point energy extraction.} Therefore, through a QET protocol, the quantum fluctuations in a vacuum can be harnessed to extract and transfer energy via entanglement correlations.  This idea has been generalized to other quantum systems such as QET in harmonic chain model\cite{PhysRevA.82.042329}, spin chain model\cite{Hotta_2009}, trapped ion model\cite{PhysRevA.80.042323}, quantum hall system\cite{PhysRevA.84.032336}, in black hole\cite{PhysRevD.81.044025}. Furthermore, QET was recently realized in the experiment \cite{Rodriguez-Briones:2022jla}, and it was demonstrated in quantum hardware \cite{PhysRevApplied.20.024051}.

\textcolor{black}{However, it was emphasized in} \cite{Frey_2013} for their a specific minimal model that entanglement is not a necessary ingredient for QET by studying it on a thermal state; It is argued that quantum discord is sufficient for QET in that model. Moreover, it was argued \textcolor{black}{in the discussion and outlook section} that a product state cannot support QET \textcolor{black}{in  the same model}. \textcolor{black}{In this paper, we have explored various aspects of QET in different parametric regimes $(T,B)$ for a fixed $\alpha$ in 
a different model. We confirmed the claim about the non-necessity of entanglement by demonstrating QET in the presence of discord and the absence of entanglement in certain parametric regimes of $(T,B)$, as shown in Figure \ref{fig:1}. Furthermore, we have shown that there exist regimes in $(T,B)$ where no quantum discord is present, yet QET remains successful. This has been analytically demonstrated by proving in Section \ref{sec:level5} that neither quantum nor classical correlations are necessary, as a product ground state can still support QET. However, the original form of QET begins with some correlation in the initial state. Thus, we cannot classify this as QET and have instead termed it QEE. Additionally, in Section \ref{sec:level4}, we discuss QET in a non-passive state, which does not violate the conditions of QET except for the initial state being non-passive.} 

\textcolor{black}{We observe that when the ground state is entangled, which occurs for $B < \alpha$ in our thermal state, increasing the temperature thermal entanglement decreases, and energy extraction through QET also decreases. On the other hand, when the ground state is a product state, which occurs for $B > \alpha$, entanglement increases with temperature as the ground state mixes with other entangled eigenstates, yet energy extraction decreases. At the critical point $\alpha = B$, energy extraction vanishes despite the presence of nonzero entanglement.}

\subsection{\label{sec:subA}The generalized protocol}
In a usual QET protocol, a local measurement is performed at a site A (Alice's site), which collapses the quantum state and reveals information about the system's energy distribution. In this measurement process, some energy is injected into the site A.  The classical information about the measurement outcome is subsequently transmitted to a different location i.e., point B (Bob's site). Based on the received information, a local operation is performed at point B to extract energy from the ground state.

Here, we propose a generalized protocol in the sense that it relaxes the condition requiring the initial state in QET to be quantum correlated through entanglement or discord. Suppose we have a generic quantum system with a Hamiltonian $\bm{H}$. We consider a general state $\bm{\rho}^{AB}$ prepared from $\bm{H}$, which can be entangled ground or excited state, thermal state and so on. The protocol is done in the following way: Alice performs a measurement on her system with measurement operators $\{M_i\}$ such that $\sum_iM_i^\dagger M_i=\mathbf{I}$.
The probability of the outcome $i$ is given by the Born rule, $p_i=\tr(\mathbf{E}_i\bm{\rho})$ in which $\mathbf{E}_i=M_i^\dagger M_i$. After the measurement, the state $\bm{\rho}^{AB}$ is updated to $\frac{M_i\bm{\rho}^{AB}M_i^\dagger}{p_i}$. Now if Alice lost tracks of the measurement, then resulting ensemble description is given by
$$\bigg\{p_i, \frac{M_i\bm{\rho}^{AB}M_i^\dagger}{p_i}\bigg\}_i$$
Therefore on an average, the density operator after the measurement is given by
\begin{align}
    \bm{\rho}_A=\sum_i p_i \frac{M_i\bm{\rho}^{AB}M_i^\dagger}{p_i}=\sum_iM_i\bm{\rho}^{AB}M_i^\dagger\label{eq:1}
\end{align}
In the measurement process,  Alice infuses / extracts some energy to/from the system depending on the state being used to start the protocol, given by
\begin{align}
    \Delta E_{\text{Alice}}=\tr(\bm{\rho}_A \bm{H})-\tr(\bm{\rho}^{AB} \bm{H})\label{eq:2}
\end{align}
Once the measurement is done, Alice convey the outcome of her measurement to Bob with a single round of classical communication , and Bob performs a local unitary operation(LOCC) $\bm{U}_i$ on his system , thus on an average, the final state after LOCC is given by
\begin{align}
    \bm{\rho}_B=\sum_i \bm{U}_i M_i\bm{\rho}^{AB}M_i^\dagger \bm{U}_i^\dagger\label{eq:3}
\end{align}

After this protocol, Bob can extracts some energy given by
\begin{align}
    \Delta E_{\text{extract}}=\tr(\bm{\rho}_A \bm{H})-\tr(\bm{\rho}_B \bm{H})\label{eq:4}
\end{align}
Whenever the equation\eqref{eq:4} is positive.

\section{\label{sec:level2} The minimal QET model}

\subsection{\label{sec:subIIA}State Preparation}
We have considered one of the simplest bipartite system that contains at least one entangled eigenstates i.e., a model with two interacting qubits  with a interaction parameter $\alpha$ in an external magnetic field $B$ along the $z-$direction. This model is also known as Heisenberg XY model. The model is described as 
$$\bm{H}=\frac{B}{2}(\sigma^{A}_z+\sigma^{B}_z)+\alpha(\sigma^{A}_{+}\ot\sigma^{B}_{-}+\sigma^{A}_{-}\ot\sigma^{B}_{+})$$
Where $\sigma^{i}_{\pm}=\frac{\sigma^{i}_x\pm i\sigma^{i}_y}{2}$ for $i=A,B$. Now we attach the whole system with a thermal bath at a temperature $T$ 


So the initial mixed state for the entire system is described by the density matrix 
\begin{align}
    \bm{\rho}^{AB}=\frac{e^{-\beta \bm{H}}}{\tr(e^{-\beta \bm{H}})}\label{eq:5}
\end{align}

where $\beta=\frac{1}{k_B T}$ is inverse temperature parameter and $k_B:=1$ is the Boltzmann constant.
The partition function is given by $Z=\tr(e^{-\beta \bm{H}})=\sum_{n}\braket{e_n|e^{-\beta\bm{H}}|e_n}$ where $e_n$ are the eigenstates of $\bm{H}$.
Also we can see that the system lives in the Hilbert space spanned by
$\{0,1\}^{\ot 2}$, thus the dimension of the Hilbert space is $\dim\spn{0,1}^{\ot 2}=4$. In the below table, we have described all the possible distinct energy states along with the number of degeneracy.  
\begin{center}
\begin{tabular}{ c c c c   }\label{tab:tab1}
 \text{Level}& \text{States}& \text{Degeneracy} &\text{Energy}\\ 
\hline
 $E_{00}:$& $\ket{00}$& $1$ & $-B$ \\
 $E_{11}:$& $\ket{11}$& $1$ & $+B$ \\
 $E_{+}:$& $\ket{\psi^+}:=\frac{\ket{01}+\ket{10}}{\sqrt{2}}$& $1$ & $\alpha$ \\
 $E_{-}:$& $\ket{\psi^-}:=\frac{\ket{01}-\ket{10}}{\sqrt{2}}$& $1$ & $-\alpha$ \\
\end{tabular}
\end{center}
From the above table we can observe that the states $\ket{\psi^\pm}$ are entangled states.
Now to keep the Hamiltonian positive semi-definite, we can add a suitable term $\epsilon$ to the original Hamiltonian $\bm{H}$ which does not affect the dynamics of the system.

Since $\{\ket{e_n}\}_n=\{\ket{00},\ket{11},\ket{\psi^\pm}\}$ are complete orthonormal eigenbasis of the Hamiltonian $\bm{H}$ i.e.,  $\bm{H}\ket{e_n}=E_n\ket{e_n}$ and $\sum_n\ket{e_n}\bra{e_n}=\mathbf{I}$. Thus we can rewrite the initial density matrix from the equation\eqref{eq:5} as follows:
\begin{align}
    \bm{\rho}^{AB}=\sum_np_n\ket{e_n}\bra{e_n}\label{eq:6}
\end{align}
Where $p_n:=\frac{e^{-\beta E_n}}{Z}$ denotes the thermal probability of the eigenstate $\ket{e_n}$   and $Z$ is the partition function given by
\begin{align}
    Z=\sum_ng(E_n)e^{-\beta E_n}=2(\cosh(\beta \alpha)+\cosh(\beta B))\label{eq:7}
\end{align}
Where $g(E_n)$ counts the number of degenerate states with energy $E_n$. We assume, $B\le \alpha$ to keep $\ket{\psi^{-}}$ as the entangled ground state. We can observed that at a low temperature limit i.e $T\ra 0$, the inverse temperature $\beta\ra \inn$, consequently $p_{-}$ survives as $E_{-}+\epsilon=0$ ,and $E_n+\epsilon>0$ for all $n$. Therefore, only the ground state survives, thus in this limit QET protocol for the ground state can be recovered.

\subsection{\label{sec:subB}Measurement on site A}
In this subsection we will begin with a projective measurement scheme for Alice. Suppose Alice makes projective measurements on her spin qubit system,  which is given by $M_A(k)=\frac{1}{2}(\mathbf{I}+k\sigma_x^A)$ for $k\in\{\pm 1\}$. Clearly we see that $M_A(k)=M^\dagger_A(k)$ and $M_A(k)M^\dagger_A(k)=M_A(k)$. Moreover it satisfy $\sum_{k\in \{\pm 1\}}M_A(k)=\mathbf{I}$. The measurement outcome through Alice's measurements are $\pm 1$. After the measurements, the state $\bm{\rho}^{AB}$ takes the similar form as in the equation \eqref{eq:1} i.e.,
\begin{align}
    \bm{\rho}_A=\sum_{k\in \{\pm 1\}}\sum_np_n\ket{\Psi_{A,n}(k)}\bra{\Psi_{A,n}(k)}\label{eq:8}
\end{align}

Where $\ket{\Psi_{A,n}(k)}:=M_A(k)\ket{e_n}$ for all $k\in \{\pm 1\}$ and $n\in \{00,11,+,-\}$, are given below:
\begin{align}
    \ket{\Psi_{A,00}(k)}&=\frac{1}{2}\bigg[\ket{e_{00}}+\frac{k}{\sqrt{2}}(\ket{e_+}-\ket{e_-})\bigg]\label{eq:9}\\
    \ket{\Psi_{A,11}(k)}&=\frac{1}{2}\bigg[\ket{e_{11}}+\frac{k}{\sqrt{2}}(\ket{e_+}+\ket{e_-})\bigg]\label{eq:10}\\
    \ket{\Psi_{A,+}(k)}&=\frac{1}{2}\bigg[\ket{e_{+}}+\frac{k}{\sqrt{2}}(\ket{e_{11}}+\ket{e_{00}})\bigg]\label{eq:11}\\
    \ket{\Psi_{A,-}(k)}&=\frac{1}{2}\bigg[\ket{e_{-}}+\frac{k}{\sqrt{2}}(\ket{e_{11}}-\ket{e_{00}})\bigg]\label{eq:12}
\end{align}
Now using equations \eqref{eq:9}-\eqref{eq:12}, we can compute the energy at site A through the measurement process as:
\begin{align}
    E_A=\tr(\bm{H}\bm{\rho}_A)=\sum_k\sum_np_n\braket{\Psi_{A,n}(k)|\bm{H}|\Psi_{A,n}(k)}\label{eq:13}
\end{align}
From the equation \eqref{eq:13}, we denote $E_{A,n}=\braket{\Psi_{A,n}(k)|\bm{H}|\Psi_{A,n}(k)}$ for all $n\in\{00,11,+,-\}$. Moreover we found that $E_{A,00}=-E_{A,11}=-\frac{B}{4}$ and $E_{A,+}=-E_{A,-}=\frac{\alpha}{4}$
Therefore equation \eqref{eq:13} becomes:
\begin{align}
    E_A=-B\frac{\sinh(\beta B)}{Z}-\alpha\frac{\sinh(\beta \alpha)}{Z}\label{eq:14}
\end{align}
Thus from the equation \eqref{eq:2}, the energy injected in the measurement is given by
\begin{align}
    \boxed{\Delta E_{\text{inf}}=B\frac{\sinh(\beta B)}{Z}+\alpha\frac{\sinh(\beta \alpha)}{Z}}\label{eq:15}
\end{align}
\subsection{\label{sec:subC}
LOCC and energy extraction for B
}
Measurements by Alice have been performed in the above subsection. She now communicates her results through a classical communication channel. The assumption for the time scale here is much shorter than the time scale for energy diffusion after the measurement. Based on the information about the measurement outcomes, Bob performs a local unitary operation on his site. Suppose that for each measurement $k$, Bob chooses a unitary operation $U_B(\theta,k) := \cos(\theta)\mathbf{I} + ik\sigma^B_y\sin(\theta)$. In Appendix \ref{sec:B}, we have shown that this choice for $U_B(\theta,k)$ is indeed the optimal choice, maximizing energy teleportation. In this subsection, we determine an optimized value of $\theta = \theta_0$ such that the corresponding optimized local unitary operation $U_B(\theta_0,k)$ maximizes energy teleportation from Alice to Bob.

From the equation \eqref{eq:3}, the density matrix takes the following form after the local operation by Bob, 
\begin{align}
    \bm{\rho}_{B}=\sum_{k=\pm 1}\sum_np_n\ket{\Psi_{B,n}(k)}\bra{\Psi_{B,n}(k)}\label{eq:16}
\end{align}
where the transformed eigenstates after Bob's action are $\ket{\Psi_{B,n}(k)}:=U_B(\theta,k)\ket{\Psi_{A,n}(k)}$.

Using equations \eqref{eq:9}-\eqref{eq:12}, we explicitly calculate all the states $\ket{\Psi_{B,n}(k)}$ below:
\begin{align}
    \ket{\Psi_{B,00}(k)}&=\frac{\cos(\theta)}{2}\ket{e_{00}}+\frac{\sin(\theta)}{2}\ket{e_{11}}\notag\\
    &\hspace{1cm}+k\frac{(\cos(\theta)+\sin(\theta))}{2\sqrt{2}}\ket{e_+}\notag\\
    &\hspace{2cm}-k\frac{(\cos(\theta)-\sin(\theta))}{2\sqrt{2}}\ket{e_-}\label{eq:17}\\
    \ket{\Psi_{B,11}(k)}&=\frac{\cos(\theta)}{2}\ket{e_{11}}-\frac{\sin(\theta)}{2}\ket{e_{00}}\notag\\
    &\hspace{1cm}+k\frac{(\cos(\theta)-\sin(\theta))}{2\sqrt{2}}\ket{e_+}\notag\\
    &\hspace{2cm}+k\frac{(\cos(\theta)+\sin(\theta))}{2\sqrt{2}}\ket{e_-}\label{eq:18}\\    
    \ket{\Psi_{B,+}(k)}&=\frac{\cos(\theta)}{2}\ket{e_{+}}+\frac{\sin(\theta)}{2}\ket{e_{-}}\notag\\
    &\hspace{1cm}+k\frac{(\cos(\theta)+\sin(\theta))}{2\sqrt{2}}\ket{e_{11}}\notag\\
    &\hspace{2cm}+k\frac{(\cos(\theta)-\sin(\theta))}{2\sqrt{2}}\ket{e_{00}}\label{eq:19}\\
    \ket{\Psi_{B,-}(k)}&=\frac{\cos(\theta)}{2}\ket{e_{-}}-\frac{\sin(\theta)}{2}\ket{e_{+}}\notag\\
    &\hspace{1cm}+k\frac{(\cos(\theta)-\sin(\theta))}{2\sqrt{2}}\ket{e_{11}}\notag\\
    &\hspace{2cm}+k\frac{(\sin(\theta))+\cos(\theta)}{2\sqrt{2}}\ket{e_{00}}\label{eq:20}
\end{align}
Thus we have all the information to calculate the energy in the Bob's system after LOCC. From equation \eqref{eq:16} and using equations \eqref{eq:17}-\eqref{eq:20}, the expression for $E_B$ is given by,
\begin{align}
    E_B=\tr(\bm{H}\bm{\rho}_B)=\sum_k\sum_np_n\braket{\Psi_{B,n}(k)|\bm{H}|\Psi_{B,n}(k)}\label{eq:21}
\end{align}
Moreover, from the above equation, we denote $E_{B,n}=\braket{\Psi_{B,n}(k)|\bm{H}|\Psi_{B,n}(k)}$ for all $n$. Thus we found $E_{B,00}=-E_{B,11}=-\frac{1}{4}(\cos(2\theta)B-\sin(2\theta)\alpha)$ and $E_{B,+}=-E_{B,-}=\frac{1}{4}(\cos(2\theta)\alpha+\sin(2\theta)B)$
Therefore using the above expressions for $E_{B,n}$, the equation \eqref{eq:21} becomes,
\begin{align}
    E_B=\sum_k\sum_np_nE_{B,n}&=-2l(p_{00}-p_{11}) -2m (p_{-}-p_{+})\notag\\
    &=-4l\frac{\sinh(\beta B)}{Z}-4m\frac{\sinh(\beta \alpha)}{Z}\label{eq:22}
\end{align}
Where we have denoted $l=\frac{1}{4}(\cos(2\theta)B-\sin(2\theta)\alpha)$, and $m=\frac{1}{4}(\cos(2\theta)\alpha+\sin(2\theta)B)$

\subsubsection{Energy optimization for Bob's action\label{sec:subsub1}}
After the LOCC, Bob observes that his system loses some energy which is given by
$\Delta E_{\text{extract}}=|E_B-E_A|$ i.e., from the equation \eqref{eq:14} and equation \eqref{eq:22}
\begin{align}
    E_B-E_A&=(-4l+B)\frac{\sinh(\beta B)}{Z}+(-4m+\alpha)\frac{\sinh(\beta \alpha)}{Z}\label{eq:23}
\end{align}

Now we optimize the unitary local operation that Bob has performed such that energy extraction by Bob is maximum. Let us denote the equation \eqref{eq:23} by $F(t=2\theta):=E_B-E_A$, thus after simplification we get
\begin{align}
    F(t)&=\left(B(1 - \cos(t)) + \alpha \sin(t)\right) \frac{\sinh(\beta B)}{Z}\notag\\
    &\hspace{1cm}+ \left(\alpha(1 - \cos(t)) - B\sin(t)\right) \frac{\sinh(\beta \alpha)}{Z}\notag
\end{align}
After setting the derivative of $F(t)$ with respect to $t$ to zero i.e., $\frac{dF}{dt}=0$, we find the critical point of $F(t)$ at $t=t_0\iff \theta=\theta_0$ by solving:
\begin{align}
    \boxed{\tan(t_0) = \frac{ B\sinh(\beta \alpha)-\alpha\sinh(\beta B)}{B\sinh(\beta B) + \alpha\sinh(\beta \alpha)}}\label{eq:5.57}
\end{align}
In the appendix \ref{sec:A}, we have explicitly shown that at $\theta=\theta_0$, indeed $F$ takes the minimum value and the minimum value is negative. Thus energy extraction by Bob $|F(t_0)|$ is maximum.
Let us denote $\tan(t_0):=\frac{q}{p}$, where $p:=B\frac{\sinh(\beta B)}{Z}+\alpha\frac{\sinh(\beta \alpha)}{Z}$ and $q:=B\frac{\sinh(\beta \alpha)}{Z}-\alpha\frac{\sinh(\beta B)}{Z}$. Thus, the simplest expression for the minimum is:
\begin{align}
    \boxed{E_B - E_A = p(1 - \cos(t_0)) - q\sin(t_0)=p-\sqrt{p^2+q^2}<0}\label{eq:25}
\end{align}
\textcolor{black}{The above equation \eqref{eq:25} suggests that Bob can extract some energy from his site. To further illustrate this, we have plotted a colormap in the Figure \ref{fig:1} to show how much energy Bob can extract. We observe that for every $\alpha$, there exist regions in $(T,B)$—for example, the region $\{(\delta,B): \delta>0 \text{ and } B>\alpha, \forall \alpha\in C\}$ where $C=\{0.6,0.8,1.0\}$—where energy extraction is possible despite the absence of quantum correlations in $\bm{\rho}^{AB}$. illustrated in figure \ref{fig:2} and figure \ref{fig:3}.}

\onecolumngrid

\begin{figure}[!h]
    \centering
    \includegraphics[width=\linewidth]{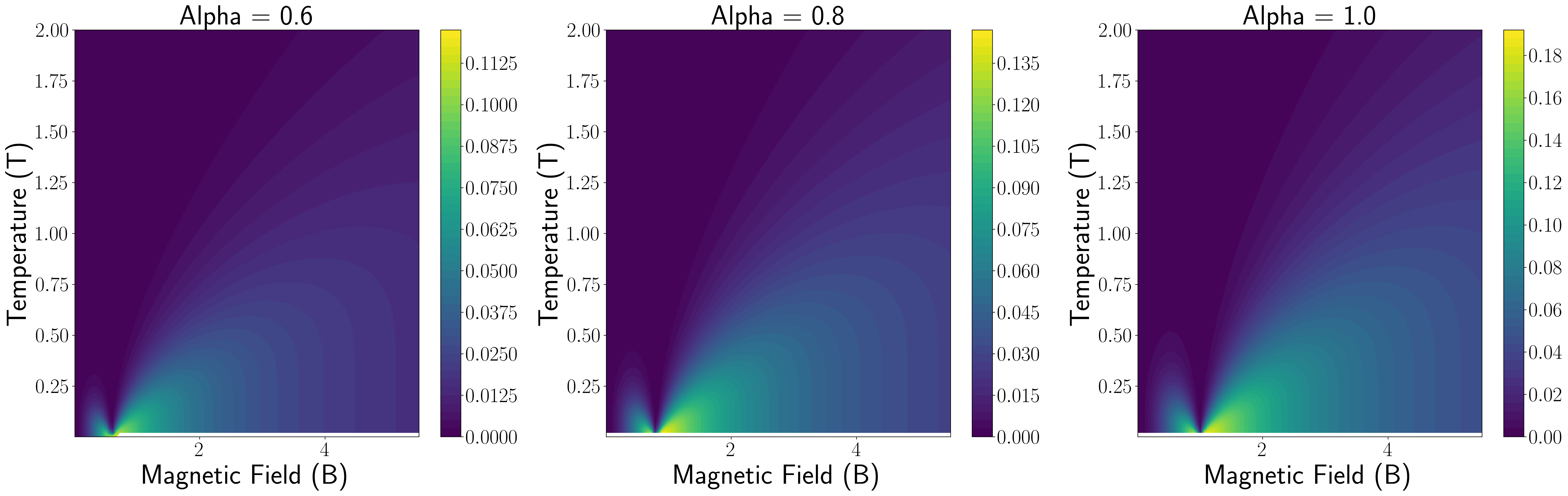}
    \caption{The energy extraction by Bob as a function of temperature ($T$) and magnetic field ($B$) is depicted. Even beyond the critical temperatures $T_c = 0.68$, $0.9$, and $1.13$ for $\alpha = 0.6$, $0.8$, and $1.0$, respectively—where neither entanglement nor \textcolor{black}{discord} exists in $\bm{\rho}^{AB}$—Bob can still extract some energy. \textcolor{black}{Moreover, at the critical value $\alpha = B$, where a phase transition occurs in the ground state from an entangled state to a product state, entanglement remains nonzero, yet no energy can be extracted.}
}
    \label{fig:1}
\end{figure}


\newpage

\twocolumngrid
\section{Quantum correlation analysis\label{sec:level3}}
\textcolor{black}{First of all, we might ask ourselves why we are interested in quantum correlations analysis? Hotta studied QET throughout, utilizing quantum correlations, particularly entanglement and discord between two regions. In this sense both discord and entanglement plays an important roles for QET. However to show that both discord and entanglement are not necessary to achieve QET, in this section, we analyze these correlations.} In general, it is very difficult to quantify the measure of entanglement for any arbitrary given state. For a pure bipartite quantum state, entanglement entropy is one of the ``good'' measures of entanglement. In other cases, many measures, such as the Convex-roof extension\cite{Vidal_2000}, Squashed Entanglement\cite{Alicki_2004, Brand_o_2011, Christandl2004}, Entanglement of Formation\cite{Bennett1996}, Entanglement Cost\cite{Bennett1996Cost}, and Relative Entropy of Entanglement\cite{Nielsen2001}, are used to capture different aspects of this quantum correlation.
Moreover, entanglement measures such as the entanglement of formation\cite{PhysRevLett.80.2245,Vidal_2000,Vidal_2002} require optimization procedures over all possible pure state decomposition of the mixed state, which is computationally expensive. These measures, while providing valuable insights, are not always practical for large systems or for real-time applications.

So, how do we quantify entanglement in mixed state for two qubit system? Fortunately, thanks to Karol Życzkowski, Paweł Horodecki, Anna Sanpera, and Maciej Lewenstein, we have a necessary and sufficient criterion for quantifying entanglement in the case of a two-qubit system, which is called the PPT criterion \cite{Horodecki_1996,Vidal_2002}.Additionally, there is another class of entanglement measure, i.e., Entanglement of Concurrence\cite{PhysRevLett.78.5022}, which is based on the property of entanglement monotone to measure entanglement.

From the equation \eqref{eq:5}, we write the initial state $\bm{\rho}^{AB}$ in the matrix representation with respect to the standard basis as follows:
\begin{align}
    \bm{\rho}^{AB}=\frac{1}{Z}\begin{bmatrix}
    e^{-\beta B} & 0 & 0& 0\\
    0&\cosh(\beta \alpha)&-\sinh(\beta \alpha)&0\\
    0&-\sinh(\beta \alpha)& \cosh(\beta\alpha)&0\\
    0&0&0&e^{\beta B}
\end{bmatrix}\label{eq:26}
\end{align}

Now in the following subsections, we will explore the entanglement analysis in terms of entanglement of concurrence and PPT criterion.

\subsection{Positive partial transpose criterion\label{sec:subIIIA}}
From the equation \eqref{eq:26}, the partial transpose of the initial state $\bm{\rho}^{AB}$ with respect to $B$ is given by
\begin{align}
    \bm{\rho}^{T_B}&=\frac{1}{Z}\begin{bmatrix}
    e^{-\beta B} & 0 & 0& -\sinh(\beta \alpha)\\
    0&\cosh(\beta \alpha)&0&0\\
    0&0& \cosh(\beta\alpha)&0\\
    -\sinh(\beta \alpha)&0&0&e^{\beta B}
\end{bmatrix}\label{eq:27}
\end{align}

Now the PPT criterion states that $\bm{\rho}^{AB}$ is a separable state if and only if all the eigenvalues of $\bm{\rho}^{T_B}$ are non-negative. Hence the contrapositive of PPT criterion is, $\bm{\rho}^{AB}$ is not a separable state if and only if $\bm{\rho}^{T_B}$ is negative i.e., if at least one of the eigenvalues of $\bm{\rho}^{T_B}$ is negative then $\bm{\rho}^{AB}$ entangled. Thus for $\bm{\rho}^{T_B}$, we have to find at least one negative eigenvalue to guarantee that $\bm{\rho}^{AB}$ is entangled.
The eigenvalues of $\bm{\rho}^{T_B}$ are given in the following expressions:
\begin{align}
    \lambda_{1,2}&=\frac{\cosh(\beta \alpha)}{Z}\notag\\
    \lambda_{\pm}&=\frac{1 + e^{2\beta B} \pm \sqrt{1 - 4e^{2\beta B} + e^{4\beta B} + 2e^{2\beta B}\cosh(2\beta\alpha)}}{2Ze^{\beta B}}\notag
\end{align}
Here we clearly see that $\lambda_1,\lambda_2,\lambda_+$ are positive eigenvalues. Only $\lambda_-$ can be negative if and only if 
$$1 + e^{2\beta B} - \sqrt{1 - 4e^{2\beta B} + e^{4\beta B} + 2e^{2\beta B}\cosh(2\beta\alpha)}<0$$
Equivalently $\lambda_-$ is negative if and only if $\beta\alpha>\frac{\cosh^{-1}(3)}{2}$, \textcolor{black}{where $\cosh^{-1}$ denotes inverse hyperbolic cosine function}. Consequently we find a critical temperature $T_c$ given by
\begin{align}
    \boxed{T_c:=\frac{2\alpha}{\cosh^{-1}(3)}=\frac{\alpha}{\ln(\sqrt{2}+1)}}\label{eq:28}
\end{align}
Therefore the state $\bm{\rho}^{AB}$ is guaranteed to be entangled for all temperature below the critical temperature $T_c$. Moreover we calculate the negativity\cite{Vidal_2002},which is defined as
$$\mathcal{N}(\bm{\rho}^{AB})=\bigg|\sum_{\lambda_i<0}\lambda_i\bigg|=\sum_i\frac{|\lambda_i|-\lambda_i}{2}$$
Where $\lambda_i$ for all $i$ are the eigenvalues of $\bm{\rho}^{T_B}$. Thus state $\bm{\rho}^{AB}$ is entangled if we have non-zero negativity.

\subsection{Entanglement of concurrence\label{sec:subIIIB}}
The entanglement of concurrence for two qubits mixed state was first proposed by Scott Hill and William K. Wootters \cite{PhysRevLett.80.2245, PhysRevLett.78.5022}, which is based on entanglement of formation\cite{Bennett1996Cost}. The entanglement of concurrence or simply concurrence of a mixed state for a two qubits system is,
$$C(\bm{\rho}^{AB}) \equiv \max(0, \lambda_1 - \lambda_2 - \lambda_3 - \lambda_4)$$
Here $\lambda_1, \ldots, \lambda_4$ are the square root of eigenvalues in decreasing order of the matrix
$\bm{\rho}^{AB}\tilde{\bm{\rho}}^{AB}$. 
where
$$\tilde{\bm{\rho}}^{AB}= (\sigma_y \otimes \sigma_y) \bm{\rho}^{*AB} (\sigma_y \otimes \sigma_y)$$
the spin-flipped state of $\bm{\rho}^{AB}$ and $\sigma_y$ a Pauli spin matrix. The complex conjugation $^*$ is taken in the eigenbasis of the Pauli matrix $\sigma_z$. The square root of eigenvalues of $\bm{\rho}^{AB}\tilde{\bm{\rho}}^{AB}$ are given by
\begin{align}
    \lambda_1&=\frac{1+\sinh(\beta\alpha)}{Z}\notag\\
    \lambda_2&=\lambda_3=1/Z\notag\\
    \lambda_4&=\frac{1-\sinh(\beta\alpha)}{Z}\notag
\end{align}
Thus $\lambda_1 - \lambda_2 - \lambda_3 - \lambda_4)=\frac{2}{Z}(\sinh(\beta\alpha)-1)$, Therefore the concurrence for the state $\bm{\rho}^{AB}$ is
\begin{align}
    \boxed{C(\bm{\rho}^{AB})=\max\bigg\{0,\frac{2}{Z}(\sinh(\beta\alpha)-1)\bigg\}}\label{eq:29}
\end{align}
Now, we observe that the concurrence in equation \eqref{eq:29} vanishes if $\sinh(\beta\alpha)-1)<0$ if and only if $\beta\alpha<\sinh^{-1}(1)=\ln(1+\sqrt{2})$.
Therefore we get the critical temperature $T_c$, defined as

\begin{align}
    \boxed{T_c=\frac{\alpha}{\sinh^{-1}(1)}=\frac{\alpha}{\ln(1+\sqrt{2})}}\label{eq:30}
\end{align}
Thus, above the critical temperature $T_c$, the state $\bm{\rho}^{AB}$ has no concurrence. Equivalently, $\bm{\rho}^{AB}$ becomes a separable state. From equations \eqref{eq:28} and \eqref{eq:30}, we observe that both the entanglement of concurrence and the PPT criterion for the entanglement analysis give us the same critical temperature $ T_c $ as expected \cite{Frank_Verstraete_2001}. It is interesting to see that the critical temperature is independent of the magnetic field $B$.
\subsection{Quantum Discord in a Two-Qubit Thermal State\label{sec:III.C}}

In this section we analyze the quantum discord of a two-qubit thermal state\eqref{eq:26}. Let us denote the elements in our initial density matrix as
\begin{equation*}
a = \frac{e^{-\beta B}}{Z},
d = \frac{e^{\beta B}}{Z},
w = \frac{\cosh (\beta \alpha)}{Z},
z = -\frac{\sinh (\beta \alpha)}{Z}
\end{equation*}

The quantum discord is defined as the difference between the total correlations (given by the mutual information) and the classical correlations obtained from an optimal measurement on subsystem $A$\cite{PhysRevLett.88.017901}. For the present X state, the discord (when the measurement is performed on $A$) can be written in closed form as \cite{PhysRevA.81.042105, PhysRevA.77.042303}:
\begin{equation}
D(\rho)= S(\tilde\rho_A) - S(\rho) + \min\{S_1, S_2\}
\label{eq:d.31}
\end{equation}
where the von Neumann entropy is defined as
\begin{equation*}
S(\rho) = -\sum_i \lambda_i \log_2 \lambda_i
\end{equation*}
where $\lambda_i$ are eigenvalues of $\rho$ and the marginal state of $A$ is obtained by tracing out subsystem $B$, thus in our case:
\begin{equation}
\tilde\rho_A = \mathrm{Tr}_B \, \bm{\rho}^{AB} = \begin{pmatrix}
a+w & 0 \\[1mm]
0 & w+d
\end{pmatrix}\label{eq:d.32}
\end{equation}
Therefore,
\begin{equation}
S(\tilde\rho_A) = - (a+w) \log_2 (a+w) - (w+d) \log_2 (w+d)\label{eq:d.33}
\end{equation}

The total state $\rho=\bm{\rho}^{AB}$ has eigenvalues $a$, $d$, $w+z$, and $w-z$; hence,
\begin{align}
    S(\rho) = -[a\log_2 a +& d\log_2 d + (w+z) \log_2 (w+z)\nonumber\\
    &+ (w-z) \log_2 (w-z)]\label{d.34}
\end{align}

The two candidate expressions for the conditional entropy after a local measurement on $A$ are given by
\begin{align}
S_1 &= - a \log_2\frac{a}{a+w} - w \log_2\frac{w}{a+w} \nonumber\\
&\hspace{2cm}- d \log_2\frac{d}{d+w} - w \log_2\frac{w}{d+w}, \label{eq:d.35} \\
S_2 &= -\frac{1}{2}(1+\Gamma) \log_2 \frac{1+\Gamma}{2} -\frac{1}{2}(1-\Gamma) \log_2 \frac{1-\Gamma}{2}, \label{eq:d.36}
\end{align}
with
\begin{equation*}
\Gamma = \sqrt{(a-d)^2 + 4z^2}
\end{equation*}
Equation\eqref{eq:d.31} then provides the full analytic expression for the quantum discord of the thermal state before any measurement is performed on $\bm{\rho}^{AB}$.

To illustrate the dependence of quantum discord on temperature $T$ and magnetic field $B$ for fixed coupling constants $\alpha$, we have computed Eq.\eqref{eq:d.31} numerically. Figure \ref{fig:3} displays contour plots of the analytic discord for three representative values $\alpha = 0.6,\, 0.8,\, 1.0$.

As seen in Fig. \ref{fig:3}, the discord exhibits nontrivial dependence on both $T$ and $B$ and eventually it vanishes after certain temperature $T$ and magnetic field $B$.

\subsection{Discord After a Local Measurement}

We note that if a nonselective measurement in the $\sigma_x$ basis is performed on subsystem $A$, thus from equation \eqref{eq:6}, the initial state is transformed in the Alice's $\ket{\pm}$ basis to
\begin{equation}
\bm{\rho}^{AB}=\sum_{i,j\in \{ \pm \}} \ket{i}\bra{j}\ot \rho^B_{ij}\label{eq:d.37}
\end{equation}
After Alice's measurements with $M(k)=\frac{I\pm \sigma_x}{2}\otimes I=\ket{k}\bra{k}\ot I$ with $k=\pm$, state becomes
\begin{equation}
\rho_A=\sum_k \ket{k}\bra{k}\ot \rho^{B}_{kk}\label{eq:d.38}
\end{equation}
Such a measurement renders the state classical with respect to $A$ (i.e., diagonal in the $\sigma_x$ basis), thus it is in the classical-quantum (CQ) state form and therefore the quantum discord for the post-measurement state vanishes,
\begin{equation}
D(\rho_A)=0.
\end{equation}

\onecolumngrid

\begin{figure*}[!ht]
    \centering
    \includegraphics[width=0.9\textwidth]{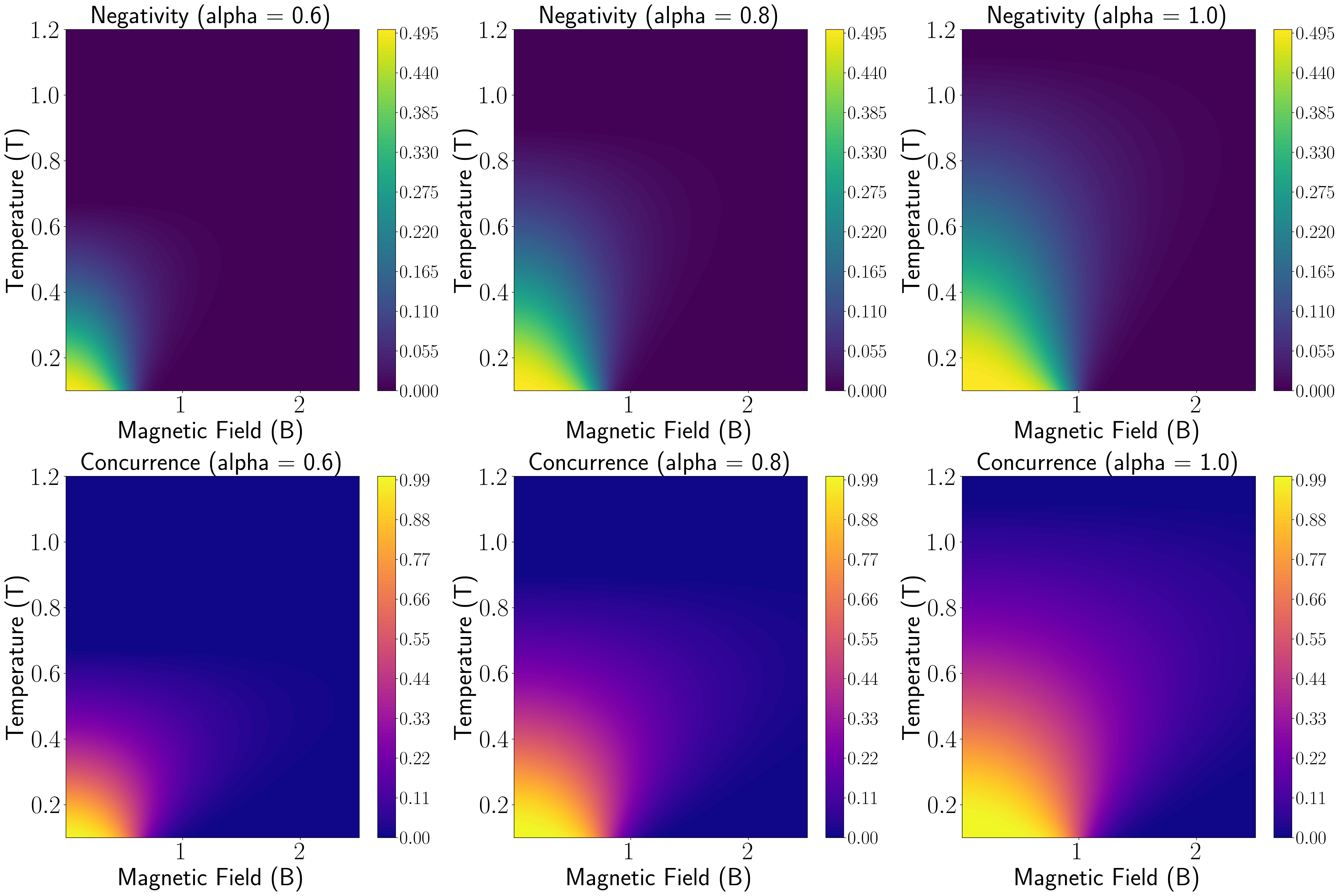}
    \caption{The contour plots of negativity and concurrence for various values of $\alpha$ are illustrated. For each fixed value of either $T$ or $B$, the color gradient decreases, indicating a corresponding decrease in entanglement as either $B$ or $T$ increases. Additionally, both concurrence and negativity exhibit similar characteristics. At low temperatures and magnetic fields, the state $\bm{\rho}^{AB}$ possesses maximum entanglement. As $T \to 0$ and $B < \alpha$, the ground state becomes a Bell pair, $\ket{\Psi^-}$, which is a maximally entangled state. \textcolor{black}{At critical point of ground state degeneracy, $\alpha=B$, there is a nonzero entanglement in $\bm{\rho}^{AB}$}.}
    \label{fig:2}
\end{figure*}

\begin{figure}[!ht]
    \centering
    \includegraphics[width=0.9\linewidth]{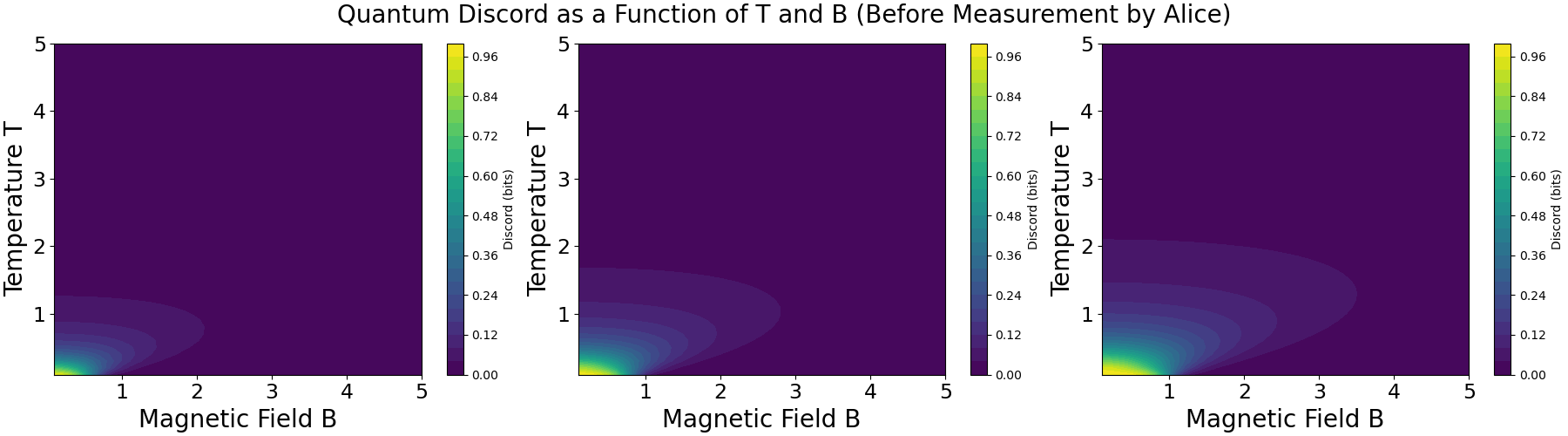}
    \caption{Quantum discord (in bits) as a function of temperature $T$ and magnetic field $B$ for three fixed values of the coupling constant: (a) $\alpha=0.6$, (b) $\alpha=0.8$, and (c) $\alpha=1.0$. The discord is computed using the expression $D(\rho)= S(\rho_A)-S(\rho)+\min\{S_1,S_2\}$ with the parameters $a$, $d$, $w$, and $z$ defined in the subsection \ref{sec:III.C}. }
    \label{fig:3}
\end{figure}

\newpage
\twocolumngrid

\section{QET with excited states\label{sec:level4}}
Suppose we perform the Q.E.T. protocol with the entangled state $\ket{e_+} = \frac{\ket{01} + \ket{10}}{\sqrt{2}}$, which is the highest excited state in our case \textcolor{black}{for $B<\alpha$. On the other hand if $B>\alpha$, the state $\ket{e_+}$ becomes second highest excited state. In both cases,} we observe that, in the measurement process conducted by Alice, she can extract an amount of energy $\alpha/2$ instead of locally injecting it into her system. After the LOCC protocol\eqref{eq:19}, Bob can extract a maximum of $E^+_B := \frac{\alpha + \sqrt{\alpha^2 + B^2}}{2}$ amount of energy. In contrast, in the ground state Q.E.T. protocol, Alice has to inject $\alpha/2$ amount of energy to initiate the protocol. Followed by LOCC, Bob can extract a maximum of $E^{G}_B:=\frac{\sqrt{\alpha^2 + B^2} - \alpha}{2}$ amount of energy. Therefore, we observe that the excited state Q.E.T. protocol dominates the ground state Q.E.T. protocol, which dominates the thermal state Q.E.T. protocol.

\section{Quantum Energy Extraction (QEE) with a product state\label{sec:level5}}
Let us rewrite the Hamiltonian from the section \ref{sec:level2} as $H=H_A+H_B+V$ with a choice of the constant term $\epsilon=B$ to make it positive semi-definite in which $H_A=\frac{B}{2}(\mathbf{I}+\sigma^A_z), H_B=\frac{B}{2}(\sigma^B_z+\mathbf{I}) \text{ and }V=\alpha(\sigma^{A}_{+}\ot\sigma^{B}_{-}+\sigma^{A}_{-}\ot\sigma^{B}_{+})$.
We assume that $B>\alpha$. The energy eigenvalues are $\epsilon_{g}=0, \epsilon_{1}=2B, \epsilon_2=B+\alpha,$ and $ \epsilon_3=B-\alpha$. The corresponding eigenstates are $\ket{e_g}=\ket{00}, \ket{e_{1}}=\ket{11}, \ket{e_2}:=\frac{\ket{01}+\ket{10}}{\sqrt{2}}$ and $\ket{e_3}:=\frac{\ket{01}-\ket{10}}{\sqrt{2}}$.\\

Clearly, we see that the ground state is a product state. We show that the energy can be extracted from the ground state by using measurements and LOCC protocol below.
Before we begin the protocol, we state the following lemma, which we use for the rest of the calculation.

\begin{lem}\label{lem:IV.1}
     Suppose $O^A_i$ and $O^B_j$ are local operators i.e., $O^A_i=O_i\ot\mathbf{I}$ and $O^{B}_j=\mathbf{I}\ot O_j$, then these operators commutes i.e., $[O^A_i,O^B_j]=0$
\end{lem}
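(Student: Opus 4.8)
The plan is to reduce the commutator to a direct computation using the mixed-product (interchange) property of the tensor product, namely that $(A \otimes B)(C \otimes D) = (AC) \otimes (BD)$ whenever the matrix products $AC$ and $BD$ are defined. Since $O^A_i = O_i \otimes \mathbf{I}$ acts nontrivially only on the first factor while $O^B_j = \mathbf{I} \otimes O_j$ acts only on the second, I expect both orderings of the product to collapse onto the same simple tensor $O_i \otimes O_j$, so the commutator will vanish identically rather than merely on a subspace.

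First I would compute the product in one order,
\begin{align*}
    O^A_i O^B_j = (O_i \otimes \mathbf{I})(\mathbf{I} \otimes O_j) = (O_i \mathbf{I}) \otimes (\mathbf{I} O_j) = O_i \otimes O_j,
\end{align*}
and then in the opposite order,
\begin{align*}
    O^B_j O^A_i = (\mathbf{I} \otimes O_j)(O_i \otimes \mathbf{I}) = (\mathbf{I} O_i) \otimes (O_j \mathbf{I}) = O_i \otimes O_j.
\end{align*}
Subtracting the two yields $[O^A_i, O^B_j] = O^A_i O^B_j - O^B_j O^A_i = O_i \otimes O_j - O_i \otimes O_j = 0$, which is exactly the claim. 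Note that the identity $\mathbf{I}$ on each factor simply passes the single nontrivial operator through untouched, which is what makes the two orderings coincide.

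The only step that deserves genuine justification — and hence the main, admittedly very mild, obstacle — is the mixed-product identity itself, since everything else is immediate once it is granted. I would justify it by verifying the action of both sides on an arbitrary simple tensor $\ket{\phi} \otimes \ket{\chi}$: the left-hand side sends it to $(AC\ket{\phi}) \otimes (BD\ket{\chi})$ and so does the right-hand side, and since simple tensors span the composite Hilbert space, linearity extends the equality to all vectors and hence establishes the operator identity. In the concrete two-qubit setting of the paper this could alternatively be checked directly at the level of the $4 \times 4$ matrices, but the vector argument makes transparent that the conclusion holds for arbitrary finite-dimensional factors and does not rely on any feature of the specific XY model.
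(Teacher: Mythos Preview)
Your proof is correct and follows essentially the same approach as the paper: both rely on the mixed-product property of the tensor product to show that $O^A_iO^B_j$ and $O^B_jO^A_i$ coincide. Your version is simply more explicit, spelling out that both orderings equal $O_i\otimes O_j$ and justifying the interchange law on simple tensors, whereas the paper compresses this into a single chain of equalities.
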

\begin{proof}
    Clearly, $O^A_i O^B_j=(O_i\ot\mathbf{I})(\mathbf{I}\ot O_j)=(\mathbf{I}\ot O_j)(O_i\ot\mathbf{I})=O^B_jO^A_i$
\end{proof}

\subsection{Measurement on site A}
The state before the measurement is $\bm{\rho}=\ket{e_g}\bra{e_g}$. Suppose, Alice use the same measurement procedure as in the section \ref{sec:subB}. Using equation \eqref{eq:2}, the energy injected by Alice in the measurement process is,

\begin{widetext}
\begin{align}
    E_A&=\sum_k\braket{e_g|M_A(k)HM_A(k)|e_g}=\sum_k\underbrace{\braket{e_g|M_A(k)H_AM_A(k)|e_g}}_{\text{Energy at Alice's site}}+\underbrace{\braket{e_g|M_A(k)H_BM_A(k)|e_g}}_{\text{Energy at Bob's site}}+\underbrace{\braket{e_g|M_A(k)VM_A(k)|e_g}}_{\text{Interaction energy}}\label{eq:31}
\end{align}
\end{widetext}
To make sure that the energy injected by Alice raise the energy of the whole system but does not affect the energy at Bob's site, we explicitly compute the energy at each site after the measurement.
\subsubsection{Energy at Alice's site after measurements}
Notice that $[H_A,M_A(k)]=\frac{iBk}{2}\sigma^A_y$. Moreover, we found $M_A(k)H_AM_A(k)=M_A(k)H_A+\frac{iBk}{2}M_A(k)\sigma^A_y$
Therefore, energy at Alice's site becomes:
\begin{align*}
   E^M_A:&= \sum_k\braket{e_g|M_A(k)H_AM_A(k)|e_g}\\
   &=\sum_k\cancelto{0}{\braket{e_g|M_A(k)H_A|e_g}}+\frac{iBk}{2}\braket{e_g|M_A(k)\sigma^A_y|e_g}\\
    &=\frac{B}{2}\sum_kk\braket{0|M_A(k)|1}=\frac{B}{2}
\end{align*}
Thus, after the measurements, there is an increment of energy density at Alice's site by an amount $B/2$ unit.
\subsubsection{Energy at Bob's site due to Alice's measurements }
Using lemma \ref{lem:IV.1} and the property $M_A^2(k)=M_A(k)$, we see that
\begin{align*}
    E^M_B:=\sum_k\braket{e_g|M_A(k)H_BM_A(k)|e_g}\\
    =\sum_k\braket{e_g|M_A(k)H_B|e_g}=0
\end{align*}
Therefore, energy at Bob's site before and after the measurement of Alice's spin is $0$.
\subsubsection{Interaction energy after Alice's measurements}
Let us calculate $[V,M_A(k)]$, to calculate the interaction energy $E^M_V=\tr(V\rho_A)$ due to the measurement by Alice.

\begin{align*}
    [V,M_A(k)]&=\alpha[\sigma^{A}_{+}\ot\sigma^{B}_{-}+\sigma^{A}_{-}\ot\sigma^{B}_{+},M_A(k)]\\
    &\hspace{1cm}=-i\frac{\alpha k}{2}\sigma^A_z\ot\sigma^B_y
\end{align*}

Which immediately implies $M_A(k)VM_A(k)=M_A(k)V-i\frac{\alpha k}{2}M_A(k)\sigma^A_z\ot\sigma^B_y$. Moreover
\begin{align*}
    \braket{e_g|M_A(k)VM_A(k)|e_g}&=\cancelto{0}{\braket{e_g|M_A(k)V|e_g}}\\
    &-i\frac{\alpha k}{2}\braket{0|M_A(k)\sigma^A_z|0}\cancelto{0}{\braket{0|\sigma^B_y|0}}=0
\end{align*}
Thus, we observe that the interaction energy is $$E^M_V=\sum_k\braket{e_g|M_A(k)VM_A(k)|e_g}=0$$
Therefore, using equation \eqref{eq:31}, we conclude that Alice's measurement increases the energy in the whole system solely due to the energy density at her site, i.e., $E_A = E^M_A + E^M_B + E^M_V = \frac{B}{2}$. Consequently, measurements at site A only affect the energy at site A; equivalently, Bob's site remains unaffected by Alice's measurement.

\subsection{LOCC}
Based on the measurement outcome from Alice, Bob performs the same local operation as we have seen in section \ref{sec:subC}. The state after the local operation is
\begin{align}
\rho_B=\sum_kU_B(k)M_A(k)\ket{e_g}\bra{e_g}M^\dagger_A(k)U^\dagger_B(k)\label{eq:32}
\end{align}

The energy through Bob's action is given by $E_B=\tr(H\rho_B)$. Moreover,
\begin{align}
    E_B=\sum_k\braket{e_g|M^\dagger_A(k)U^\dagger_B(k)HU_B(k)M_A(k)|e_g}\label{eq:33}
\end{align}
Now, we evaluate $\Delta E_{\text{tel}}:=E_B-E_A$ to see whether Bob can extract some energy or not. Clearly,
\begin{align}
    \Delta E_{\text{tel}}=\sum_k\braket{e_g|M_A(k)\big[U^\dagger_B(k)HU_B(k)-H\big]M_A(k)|e_g}\label{eq:34}
\end{align}

As,we have $M^\dagger_A(k)=M_A(k)=\frac{1}{2}(\mathbf{I}+k\sigma^A_x)$ and $U_B(k)=\cos(\theta)\mathbf{I}+ik\sin(\theta)\sigma^B_y$.
To simplify the equation\eqref{eq:34}, we evaluate $U^\dagger_B(k)HU_B(k)-H$. Furthermore, we calculate the commutator $[H,U_B]=ik\sin(\theta)[H,\sigma^B_y]$-- which immediately follows:
\begin{align}
   U^\dagger_B(k)HU_B(k)-H= ik\sin(\theta)U^\dagger_B(k)[H,\sigma^B_y]\label{eq:35}
\end{align}
Notice that, $[H,\sigma^B_y]=[H_A,\sigma^B_y]+[H_B,\sigma^B_y]+[V,\sigma^B_y]$. Using lemma \ref{lem:IV.1}, $[H_A,\sigma^B_y]=0$. Also notice that,
$[H_B,\sigma^B_y]=-iB\sigma^B_x$ and $[V,\sigma^B_y]=i\alpha\sigma^A_x\ot\sigma^B_z$.
Therefore,
\begin{align}
    [H,\sigma^B_y]=-iB\sigma^B_x+i\alpha\sigma^A_x\ot\sigma^B_z\label{eq:36}
\end{align}
Now, plugging equation \eqref{eq:36} in the equation\eqref{eq:35}, we get;
\begin{align}
    U^\dagger_B(k)&HU_B(k)-H\notag\\
    &= ik\sin(\theta)\big(-iBU^\dagger_B(k)\sigma^B_x +i\alpha \sigma^A_x\ot U^\dagger_B(k)\sigma^B_z\big)\label{eq:37}
\end{align}
Again using the lemma \ref{lem:IV.1}, equation\eqref{eq:37}, $M_A^2(k)=M_A(k)$ and $[M_A(k),\sigma^A_x]=0$, we get

\begin{align}
    &M_A(k)\big[U^\dagger_B(k)HU_B(k)-H\big]M_A(k)=\notag\\ &ik\sin(\theta)\big(-iBM_A(k)U^\dagger_B(k)\sigma^B_x +i\alpha M_A(k)\sigma^A_x\ot U^\dagger_B(k)\sigma^B_z\big)\label{eq:38}
\end{align}

Therefore, from the above equation and the equation\eqref{eq:34}, we get $E_B-E_A=\Delta E_{\text{tel}}$ as below:
\begin{align}
   \Delta E_{\text{tel}}&=B\sin(\theta)\sum_kk\braket{0|M_A(k)|0}\braket{0|U^\dagger_B(k)\sigma^B_x|0}\notag\\
   &\hspace{1cm}-\alpha\sin(\theta)\sum_kk\braket{0|M_A(k)\sigma^A_x|0}\braket{0| U^\dagger_B(k)\sigma^B_z|0}\notag
\end{align}
\begin{align}
   &=B\sin(\theta)\sum_kk\braket{0|M_A(k)|0}\braket{0|U^\dagger_B(k)|1}\notag\\
   &\hspace{1cm}+\alpha\sin(\theta)\sum_kk\braket{0|M_A(k)|1}\braket{0| U^\dagger_B(k)|0}\notag\\
   &=B\sin(\theta)\sum_k\frac{k^2}{2}\sin(\theta)+\alpha\sin(\theta)\sum_k\frac{k^2}{2}\cos(\theta)\notag\\
   &=\frac{B}{2}(1-\cos(2\theta))+\frac{\alpha}{2}\sin(2\theta)\label{eq:39}
\end{align}
Clearly, the minimum of $E_B-E_A$ from the above equation is given by
$$\boxed{\min_\theta(E_{\text{tel}})=\frac{B-\sqrt{B^2+\alpha^2}}{2}<0}$$
with the choice of $\sin(2\theta)=\frac{-\alpha}{\sqrt{\alpha^2+B^2}}$ and $\cos(2\theta)=\frac{B}{\sqrt{\alpha^2+B^2}}$

The minimum of $(E_{\text{tel}})<0 $ indicates that Bob loses energy from his system. Therefore, Bob can extract maximum $\frac{\sqrt{B^2+\alpha^2}-B}{2}$ unit of energy. This shows that, unlike the \textcolor{black}{usual QET protocol, which requires quantum correlations in the initial state}, one can \textcolor{black}{also extract energy by relaxing the requirement of quantum correlations in the initial state, as in the case of a product ground state here.}

\section{Conclusions}
This investigation into the Quantum Energy Teleportation (QET) protocol across different quantum states and \textcolor{black}{various parametric regimes} reveals significant insights of the role of \textcolor{black}{quantum correlations such as discord and }entanglement as a resource. \textcolor{black}{In the $B<\alpha$ regime,} we found that as ``thermal entanglement'' as well as discord resource decreases, Bob's energy extraction also decreases over temperature. \textcolor{black}{On the other hand, for $B > \alpha$, energy extraction decreases even though entanglement and discord initially increases with temperature before decreasing again. Moreover, there exist regions in $(T,B)$ where energy extraction is possible despite the absence of quantum correlations.}  
In section \ref{sec:level4}, we observed that in the excited state QET protocol with $\ket{e_+}=\ket{\psi^{+}}$, not only allows Bob to extract higher energy but also enables Alice to extract energy during the measurement process. This contrasts with the ground-state QET protocol, where Alice must inject energy to initiate the process, and Bob's energy extraction is based on the use of LOCC. Moreover, for the first time, our analysis in Section \ref{sec:level5} shows that, although one cannot directly extract energy from the ground state\textemdash which is a product state\textemdash using any single-step quantum operation such as quantum measurements alone, it is possible to extract energy from the ground state by activating it through measurements followed by LOCC, which constitutes the QET protocol. It is important to note that the QET assumption, which requires the time scale for energy diffusion after local measurements to be much longer than the time scale for LOCC, is maintained. Therefore, although, in principle, we use all the components of the QET protocol\textemdash namely, quantum measurements and LOCC\textemdash we cannot call it QET because the ground state under consideration is a product state. Hence, we refer to it as Quantum Energy Extraction (QEE).
\section{Acknowledgments}
I would like to thank Prof. Kehrein, Prof. Mani and Prof. Usha for their valuable comments and discussions for this work.

\bibliography{apssamp}

\providecommand{\noopsort}[1]{}\providecommand{\singleletter}[1]{#1}%
\begin{thebibliography}{30}%
\makeatletter
\providecommand \@ifxundefined [1]{%
 \@ifx{#1\undefined}
}%
\providecommand \@ifnum [1]{%
 \ifnum #1\expandafter \@firstoftwo
 \else \expandafter \@secondoftwo
 \fi
}%
\providecommand \@ifx [1]{%
 \ifx #1\expandafter \@firstoftwo
 \else \expandafter \@secondoftwo
 \fi
}%
\providecommand \natexlab [1]{#1}%
\providecommand \enquote  [1]{``#1''}%
\providecommand \bibnamefont  [1]{#1}%
\providecommand \bibfnamefont [1]{#1}%
\providecommand \citenamefont [1]{#1}%
\providecommand \href@noop [0]{\@secondoftwo}%
\providecommand \href [0]{\begingroup \@sanitize@url \@href}%
\providecommand \@href[1]{\@@startlink{#1}\@@href}%
\providecommand \@@href[1]{\endgroup#1\@@endlink}%
\providecommand \@sanitize@url [0]{\catcode `\\12\catcode `\$12\catcode `\&12\catcode `\#12\catcode `\^12\catcode `\_12\catcode `\%12\relax}%
\providecommand \@@startlink[1]{}%
\providecommand \@@endlink[0]{}%
\providecommand \url  [0]{\begingroup\@sanitize@url \@url }%
\providecommand \@url [1]{\endgroup\@href {#1}{\urlprefix }}%
\providecommand \urlprefix  [0]{URL }%
\providecommand \Eprint [0]{\href }%
\providecommand \doibase [0]{https://doi.org/}%
\providecommand \selectlanguage [0]{\@gobble}%
\providecommand \bibinfo  [0]{\@secondoftwo}%
\providecommand \bibfield  [0]{\@secondoftwo}%
\providecommand \translation [1]{[#1]}%
\providecommand \BibitemOpen [0]{}%
\providecommand \bibitemStop [0]{}%
\providecommand \bibitemNoStop [0]{.\EOS\space}%
\providecommand \EOS [0]{\spacefactor3000\relax}%
\providecommand \BibitemShut  [1]{\csname bibitem#1\endcsname}%
\let\auto@bib@innerbib\@empty
\bibitem [{\citenamefont {Bennett}\ \emph {et~al.}(1993)\citenamefont {Bennett}, \citenamefont {Brassard}, \citenamefont {Cr\'epeau}, \citenamefont {Jozsa}, \citenamefont {Peres},\ and\ \citenamefont {Wootters}}]{PhysRevLett.70.1895}%
  \BibitemOpen
  \bibfield  {author} {\bibinfo {author} {\bibfnamefont {C.~H.}\ \bibnamefont {Bennett}}, \bibinfo {author} {\bibfnamefont {G.}~\bibnamefont {Brassard}}, \bibinfo {author} {\bibfnamefont {C.}~\bibnamefont {Cr\'epeau}}, \bibinfo {author} {\bibfnamefont {R.}~\bibnamefont {Jozsa}}, \bibinfo {author} {\bibfnamefont {A.}~\bibnamefont {Peres}},\ and\ \bibinfo {author} {\bibfnamefont {W.~K.}\ \bibnamefont {Wootters}},\ }\bibfield  {title} {\bibinfo {title} {Teleporting an unknown quantum state via dual classical and einstein-podolsky-rosen channels},\ }\href {https://doi.org/10.1103/PhysRevLett.70.1895} {\bibfield  {journal} {\bibinfo  {journal} {Phys. Rev. Lett.}\ }\textbf {\bibinfo {volume} {70}},\ \bibinfo {pages} {1895} (\bibinfo {year} {1993})}\BibitemShut {NoStop}%
\bibitem [{\citenamefont {Hotta}(2011)}]{hotta2011quantumenergyteleportationintroductory}%
  \BibitemOpen
  \bibfield  {author} {\bibinfo {author} {\bibfnamefont {M.}~\bibnamefont {Hotta}},\ }\href {https://arxiv.org/abs/1101.3954} {\bibinfo {title} {Quantum energy teleportation: An introductory review}} (\bibinfo {year} {2011}),\ \Eprint {https://arxiv.org/abs/1101.3954} {arXiv:1101.3954 [quant-ph]} \BibitemShut {NoStop}%
\bibitem [{\citenamefont {Wang}\ and\ \citenamefont {Yao}(2024)}]{wang2024quantumenergyteleportationversus}%
  \BibitemOpen
  \bibfield  {author} {\bibinfo {author} {\bibfnamefont {J.}~\bibnamefont {Wang}}\ and\ \bibinfo {author} {\bibfnamefont {S.}~\bibnamefont {Yao}},\ }\href {https://arxiv.org/abs/2405.13886} {\bibinfo {title} {Quantum energy teleportation versus information teleportation}} (\bibinfo {year} {2024}),\ \Eprint {https://arxiv.org/abs/2405.13886} {arXiv:2405.13886 [quant-ph]} \BibitemShut {NoStop}%
\bibitem [{\citenamefont {Hotta}(2008{\natexlab{a}})}]{Hotta_2008}%
  \BibitemOpen
  \bibfield  {author} {\bibinfo {author} {\bibfnamefont {M.}~\bibnamefont {Hotta}},\ }\bibfield  {title} {\bibinfo {title} {A protocol for quantum energy distribution},\ }\href {https://doi.org/10.1016/j.physleta.2008.07.007} {\bibfield  {journal} {\bibinfo  {journal} {Physics Letters A}\ }\textbf {\bibinfo {volume} {372}},\ \bibinfo {pages} {5671–5676} (\bibinfo {year} {2008}{\natexlab{a}})}\BibitemShut {NoStop}%
\bibitem [{\citenamefont {Hotta}(2009{\natexlab{a}})}]{Hotta_2009}%
  \BibitemOpen
  \bibfield  {author} {\bibinfo {author} {\bibfnamefont {M.}~\bibnamefont {Hotta}},\ }\bibfield  {title} {\bibinfo {title} {Quantum energy teleportation in spin chain systems},\ }\href {https://doi.org/10.1143/jpsj.78.034001} {\bibfield  {journal} {\bibinfo  {journal} {Journal of the Physical Society of Japan}\ }\textbf {\bibinfo {volume} {78}},\ \bibinfo {pages} {034001} (\bibinfo {year} {2009}{\natexlab{a}})}\BibitemShut {NoStop}%
\bibitem [{\citenamefont {Hotta}(2010{\natexlab{a}})}]{Hotta_2010}%
  \BibitemOpen
  \bibfield  {author} {\bibinfo {author} {\bibfnamefont {M.}~\bibnamefont {Hotta}},\ }\bibfield  {title} {\bibinfo {title} {Energy entanglement relation for quantum energy teleportation},\ }\href {https://doi.org/10.1016/j.physleta.2010.06.058} {\bibfield  {journal} {\bibinfo  {journal} {Physics Letters A}\ }\textbf {\bibinfo {volume} {374}},\ \bibinfo {pages} {3416–3421} (\bibinfo {year} {2010}{\natexlab{a}})}\BibitemShut {NoStop}%
\bibitem [{\citenamefont {Hotta}\ \emph {et~al.}(2014)\citenamefont {Hotta}, \citenamefont {Matsumoto},\ and\ \citenamefont {Yusa}}]{PhysRevA.89.012311}%
  \BibitemOpen
  \bibfield  {author} {\bibinfo {author} {\bibfnamefont {M.}~\bibnamefont {Hotta}}, \bibinfo {author} {\bibfnamefont {J.}~\bibnamefont {Matsumoto}},\ and\ \bibinfo {author} {\bibfnamefont {G.}~\bibnamefont {Yusa}},\ }\bibfield  {title} {\bibinfo {title} {Quantum energy teleportation without a limit of distance},\ }\href {https://doi.org/10.1103/PhysRevA.89.012311} {\bibfield  {journal} {\bibinfo  {journal} {Phys. Rev. A}\ }\textbf {\bibinfo {volume} {89}},\ \bibinfo {pages} {012311} (\bibinfo {year} {2014})}\BibitemShut {NoStop}%
\bibitem [{\citenamefont {Hotta}(2008{\natexlab{b}})}]{PhysRevD.78.045006}%
  \BibitemOpen
  \bibfield  {author} {\bibinfo {author} {\bibfnamefont {M.}~\bibnamefont {Hotta}},\ }\bibfield  {title} {\bibinfo {title} {Quantum measurement information as a key to energy extraction from local vacuums},\ }\href {https://doi.org/10.1103/PhysRevD.78.045006} {\bibfield  {journal} {\bibinfo  {journal} {Phys. Rev. D}\ }\textbf {\bibinfo {volume} {78}},\ \bibinfo {pages} {045006} (\bibinfo {year} {2008}{\natexlab{b}})}\BibitemShut {NoStop}%
\bibitem [{\citenamefont {Nambu}\ and\ \citenamefont {Hotta}(2010)}]{PhysRevA.82.042329}%
  \BibitemOpen
  \bibfield  {author} {\bibinfo {author} {\bibfnamefont {Y.}~\bibnamefont {Nambu}}\ and\ \bibinfo {author} {\bibfnamefont {M.}~\bibnamefont {Hotta}},\ }\bibfield  {title} {\bibinfo {title} {Quantum energy teleportation with a linear harmonic chain},\ }\href {https://doi.org/10.1103/PhysRevA.82.042329} {\bibfield  {journal} {\bibinfo  {journal} {Phys. Rev. A}\ }\textbf {\bibinfo {volume} {82}},\ \bibinfo {pages} {042329} (\bibinfo {year} {2010})}\BibitemShut {NoStop}%
\bibitem [{\citenamefont {Hotta}(2009{\natexlab{b}})}]{PhysRevA.80.042323}%
  \BibitemOpen
  \bibfield  {author} {\bibinfo {author} {\bibfnamefont {M.}~\bibnamefont {Hotta}},\ }\bibfield  {title} {\bibinfo {title} {Quantum energy teleportation with trapped ions},\ }\href {https://doi.org/10.1103/PhysRevA.80.042323} {\bibfield  {journal} {\bibinfo  {journal} {Phys. Rev. A}\ }\textbf {\bibinfo {volume} {80}},\ \bibinfo {pages} {042323} (\bibinfo {year} {2009}{\natexlab{b}})}\BibitemShut {NoStop}%
\bibitem [{\citenamefont {Yusa}\ \emph {et~al.}(2011)\citenamefont {Yusa}, \citenamefont {Izumida},\ and\ \citenamefont {Hotta}}]{PhysRevA.84.032336}%
  \BibitemOpen
  \bibfield  {author} {\bibinfo {author} {\bibfnamefont {G.}~\bibnamefont {Yusa}}, \bibinfo {author} {\bibfnamefont {W.}~\bibnamefont {Izumida}},\ and\ \bibinfo {author} {\bibfnamefont {M.}~\bibnamefont {Hotta}},\ }\bibfield  {title} {\bibinfo {title} {Quantum energy teleportation in a quantum hall system},\ }\href {https://doi.org/10.1103/PhysRevA.84.032336} {\bibfield  {journal} {\bibinfo  {journal} {Phys. Rev. A}\ }\textbf {\bibinfo {volume} {84}},\ \bibinfo {pages} {032336} (\bibinfo {year} {2011})}\BibitemShut {NoStop}%
\bibitem [{\citenamefont {Hotta}(2010{\natexlab{b}})}]{PhysRevD.81.044025}%
  \BibitemOpen
  \bibfield  {author} {\bibinfo {author} {\bibfnamefont {M.}~\bibnamefont {Hotta}},\ }\bibfield  {title} {\bibinfo {title} {Controlled hawking process by quantum energy teleportation},\ }\href {https://doi.org/10.1103/PhysRevD.81.044025} {\bibfield  {journal} {\bibinfo  {journal} {Phys. Rev. D}\ }\textbf {\bibinfo {volume} {81}},\ \bibinfo {pages} {044025} (\bibinfo {year} {2010}{\natexlab{b}})}\BibitemShut {NoStop}%
\bibitem [{\citenamefont {Rodr\'\i{}guez-Briones}\ \emph {et~al.}(2023)\citenamefont {Rodr\'\i{}guez-Briones}, \citenamefont {Katiyar}, \citenamefont {Mart\'\i{}n-Mart\'\i{}nez},\ and\ \citenamefont {Laflamme}}]{Rodriguez-Briones:2022jla}%
  \BibitemOpen
  \bibfield  {author} {\bibinfo {author} {\bibfnamefont {N.~A.}\ \bibnamefont {Rodr\'\i{}guez-Briones}}, \bibinfo {author} {\bibfnamefont {H.}~\bibnamefont {Katiyar}}, \bibinfo {author} {\bibfnamefont {E.}~\bibnamefont {Mart\'\i{}n-Mart\'\i{}nez}},\ and\ \bibinfo {author} {\bibfnamefont {R.}~\bibnamefont {Laflamme}},\ }\bibfield  {title} {\bibinfo {title} {{Experimental Activation of Strong Local Passive States with Quantum Information}},\ }\href {https://doi.org/10.1103/PhysRevLett.130.110801} {\bibfield  {journal} {\bibinfo  {journal} {Phys. Rev. Lett.}\ }\textbf {\bibinfo {volume} {130}},\ \bibinfo {pages} {110801} (\bibinfo {year} {2023})},\ \Eprint {https://arxiv.org/abs/2203.16269} {arXiv:2203.16269 [quant-ph]} \BibitemShut {NoStop}%
\bibitem [{\citenamefont {Ikeda}(2023)}]{PhysRevApplied.20.024051}%
  \BibitemOpen
  \bibfield  {author} {\bibinfo {author} {\bibfnamefont {K.}~\bibnamefont {Ikeda}},\ }\bibfield  {title} {\bibinfo {title} {Demonstration of quantum energy teleportation on superconducting quantum hardware},\ }\href {https://doi.org/10.1103/PhysRevApplied.20.024051} {\bibfield  {journal} {\bibinfo  {journal} {Phys. Rev. Appl.}\ }\textbf {\bibinfo {volume} {20}},\ \bibinfo {pages} {024051} (\bibinfo {year} {2023})}\BibitemShut {NoStop}%
\bibitem [{\citenamefont {Frey}\ \emph {et~al.}(2013)\citenamefont {Frey}, \citenamefont {Gerlach},\ and\ \citenamefont {Hotta}}]{Frey_2013}%
  \BibitemOpen
  \bibfield  {author} {\bibinfo {author} {\bibfnamefont {M.~R.}\ \bibnamefont {Frey}}, \bibinfo {author} {\bibfnamefont {K.}~\bibnamefont {Gerlach}},\ and\ \bibinfo {author} {\bibfnamefont {M.}~\bibnamefont {Hotta}},\ }\bibfield  {title} {\bibinfo {title} {Quantum energy teleportation between spin particles in a gibbs state},\ }\href {https://doi.org/10.1088/1751-8113/46/45/455304} {\bibfield  {journal} {\bibinfo  {journal} {Journal of Physics A: Mathematical and Theoretical}\ }\textbf {\bibinfo {volume} {46}},\ \bibinfo {pages} {455304} (\bibinfo {year} {2013})}\BibitemShut {NoStop}%
\bibitem [{\citenamefont {Vidal}(2000)}]{Vidal_2000}%
  \BibitemOpen
  \bibfield  {author} {\bibinfo {author} {\bibfnamefont {G.}~\bibnamefont {Vidal}},\ }\bibfield  {title} {\bibinfo {title} {Entanglement monotones},\ }\href {https://doi.org/10.1080/09500340008244048} {\bibfield  {journal} {\bibinfo  {journal} {Journal of Modern Optics}\ }\textbf {\bibinfo {volume} {47}},\ \bibinfo {pages} {355–376} (\bibinfo {year} {2000})}\BibitemShut {NoStop}%
\bibitem [{\citenamefont {Alicki}\ and\ \citenamefont {Fannes}(2004)}]{Alicki_2004}%
  \BibitemOpen
  \bibfield  {author} {\bibinfo {author} {\bibfnamefont {R.}~\bibnamefont {Alicki}}\ and\ \bibinfo {author} {\bibfnamefont {M.}~\bibnamefont {Fannes}},\ }\bibfield  {title} {\bibinfo {title} {Continuity of quantum conditional information},\ }\href {https://doi.org/10.1088/0305-4470/37/5/l01} {\bibfield  {journal} {\bibinfo  {journal} {Journal of Physics A: Mathematical and General}\ }\textbf {\bibinfo {volume} {37}},\ \bibinfo {pages} {L55–L57} (\bibinfo {year} {2004})}\BibitemShut {NoStop}%
\bibitem [{\citenamefont {Brandão}\ \emph {et~al.}(2011)\citenamefont {Brandão}, \citenamefont {Christandl},\ and\ \citenamefont {Yard}}]{Brand_o_2011}%
  \BibitemOpen
  \bibfield  {author} {\bibinfo {author} {\bibfnamefont {F.~G. S.~L.}\ \bibnamefont {Brandão}}, \bibinfo {author} {\bibfnamefont {M.}~\bibnamefont {Christandl}},\ and\ \bibinfo {author} {\bibfnamefont {J.}~\bibnamefont {Yard}},\ }\bibfield  {title} {\bibinfo {title} {Faithful squashed entanglement},\ }\href {https://doi.org/10.1007/s00220-011-1302-1} {\bibfield  {journal} {\bibinfo  {journal} {Communications in Mathematical Physics}\ }\textbf {\bibinfo {volume} {306}},\ \bibinfo {pages} {805–830} (\bibinfo {year} {2011})}\BibitemShut {NoStop}%
\bibitem [{\citenamefont {Christandl}\ and\ \citenamefont {Winter}(2004)}]{Christandl2004}%
  \BibitemOpen
  \bibfield  {author} {\bibinfo {author} {\bibfnamefont {M.}~\bibnamefont {Christandl}}\ and\ \bibinfo {author} {\bibfnamefont {A.}~\bibnamefont {Winter}},\ }\bibfield  {title} {\bibinfo {title} {Squashed entanglement: An additive entanglement measure},\ }\href {https://doi.org/10.1063/1.1633506} {\bibfield  {journal} {\bibinfo  {journal} {Journal of Mathematical Physics}\ }\textbf {\bibinfo {volume} {45}},\ \bibinfo {pages} {829} (\bibinfo {year} {2004})}\BibitemShut {NoStop}%
\bibitem [{\citenamefont {Bennett}\ \emph {et~al.}(1996{\natexlab{a}})\citenamefont {Bennett}, \citenamefont {Brassard}, \citenamefont {Kremer},\ and\ \citenamefont {Plenio}}]{Bennett1996}%
  \BibitemOpen
  \bibfield  {author} {\bibinfo {author} {\bibfnamefont {C.~H.}\ \bibnamefont {Bennett}}, \bibinfo {author} {\bibfnamefont {G.}~\bibnamefont {Brassard}}, \bibinfo {author} {\bibfnamefont {I.}~\bibnamefont {Kremer}},\ and\ \bibinfo {author} {\bibfnamefont {M.~B.}\ \bibnamefont {Plenio}},\ }\bibfield  {title} {\bibinfo {title} {Concentrating partial entanglement by local operations and classical communication},\ }\href {https://doi.org/10.1103/PhysRevLett.76.722} {\bibfield  {journal} {\bibinfo  {journal} {Phys. Rev. Lett.}\ }\textbf {\bibinfo {volume} {76}},\ \bibinfo {pages} {722} (\bibinfo {year} {1996}{\natexlab{a}})}\BibitemShut {NoStop}%
\bibitem [{\citenamefont {Bennett}\ \emph {et~al.}(1996{\natexlab{b}})\citenamefont {Bennett}, \citenamefont {DiVincenzo}, \citenamefont {Smolin},\ and\ \citenamefont {Wootters}}]{Bennett1996Cost}%
  \BibitemOpen
  \bibfield  {author} {\bibinfo {author} {\bibfnamefont {C.~H.}\ \bibnamefont {Bennett}}, \bibinfo {author} {\bibfnamefont {D.~P.}\ \bibnamefont {DiVincenzo}}, \bibinfo {author} {\bibfnamefont {J.~A.}\ \bibnamefont {Smolin}},\ and\ \bibinfo {author} {\bibfnamefont {W.~K.}\ \bibnamefont {Wootters}},\ }\bibfield  {title} {\bibinfo {title} {Mixed-state entanglement and quantum error correction},\ }\href {https://doi.org/10.1103/PhysRevA.54.3824} {\bibfield  {journal} {\bibinfo  {journal} {Phys. Rev. A}\ }\textbf {\bibinfo {volume} {54}},\ \bibinfo {pages} {3824} (\bibinfo {year} {1996}{\natexlab{b}})}\BibitemShut {NoStop}%
\bibitem [{\citenamefont {Nielsen}\ and\ \citenamefont {Kempe}(2001)}]{Nielsen2001}%
  \BibitemOpen
  \bibfield  {author} {\bibinfo {author} {\bibfnamefont {M.~A.}\ \bibnamefont {Nielsen}}\ and\ \bibinfo {author} {\bibfnamefont {J.}~\bibnamefont {Kempe}},\ }\bibfield  {title} {\bibinfo {title} {Entanglement measures and the capacity of quantum channels},\ }\href {https://doi.org/10.1103/PhysRevLett.86.5870} {\bibfield  {journal} {\bibinfo  {journal} {Phys. Rev. Lett.}\ }\textbf {\bibinfo {volume} {86}},\ \bibinfo {pages} {5870} (\bibinfo {year} {2001})}\BibitemShut {NoStop}%
\bibitem [{\citenamefont {Wootters}(1998)}]{PhysRevLett.80.2245}%
  \BibitemOpen
  \bibfield  {author} {\bibinfo {author} {\bibfnamefont {W.~K.}\ \bibnamefont {Wootters}},\ }\bibfield  {title} {\bibinfo {title} {Entanglement of formation of an arbitrary state of two qubits},\ }\href {https://doi.org/10.1103/PhysRevLett.80.2245} {\bibfield  {journal} {\bibinfo  {journal} {Phys. Rev. Lett.}\ }\textbf {\bibinfo {volume} {80}},\ \bibinfo {pages} {2245} (\bibinfo {year} {1998})}\BibitemShut {NoStop}%
\bibitem [{\citenamefont {Vidal}\ and\ \citenamefont {Werner}(2002)}]{Vidal_2002}%
  \BibitemOpen
  \bibfield  {author} {\bibinfo {author} {\bibfnamefont {G.}~\bibnamefont {Vidal}}\ and\ \bibinfo {author} {\bibfnamefont {R.~F.}\ \bibnamefont {Werner}},\ }\bibfield  {title} {\bibinfo {title} {Computable measure of entanglement},\ }\bibfield  {journal} {\bibinfo  {journal} {Physical Review A}\ }\textbf {\bibinfo {volume} {65}},\ \href {https://doi.org/10.1103/physreva.65.032314} {10.1103/physreva.65.032314} (\bibinfo {year} {2002})\BibitemShut {NoStop}%
\bibitem [{\citenamefont {Horodecki}\ \emph {et~al.}(1996)\citenamefont {Horodecki}, \citenamefont {Horodecki},\ and\ \citenamefont {Horodecki}}]{Horodecki_1996}%
  \BibitemOpen
  \bibfield  {author} {\bibinfo {author} {\bibfnamefont {M.}~\bibnamefont {Horodecki}}, \bibinfo {author} {\bibfnamefont {P.}~\bibnamefont {Horodecki}},\ and\ \bibinfo {author} {\bibfnamefont {R.}~\bibnamefont {Horodecki}},\ }\bibfield  {title} {\bibinfo {title} {Separability of mixed states: necessary and sufficient conditions},\ }\href {https://doi.org/10.1016/s0375-9601(96)00706-2} {\bibfield  {journal} {\bibinfo  {journal} {Physics Letters A}\ }\textbf {\bibinfo {volume} {223}},\ \bibinfo {pages} {1–8} (\bibinfo {year} {1996})}\BibitemShut {NoStop}%
\bibitem [{\citenamefont {Hill}\ and\ \citenamefont {Wootters}(1997)}]{PhysRevLett.78.5022}%
  \BibitemOpen
  \bibfield  {author} {\bibinfo {author} {\bibfnamefont {S.~A.}\ \bibnamefont {Hill}}\ and\ \bibinfo {author} {\bibfnamefont {W.~K.}\ \bibnamefont {Wootters}},\ }\bibfield  {title} {\bibinfo {title} {Entanglement of a pair of quantum bits},\ }\href {https://doi.org/10.1103/PhysRevLett.78.5022} {\bibfield  {journal} {\bibinfo  {journal} {Phys. Rev. Lett.}\ }\textbf {\bibinfo {volume} {78}},\ \bibinfo {pages} {5022} (\bibinfo {year} {1997})}\BibitemShut {NoStop}%
\bibitem [{\citenamefont {Verstraete}\ \emph {et~al.}(2001)\citenamefont {Verstraete}, \citenamefont {Audenaert}, \citenamefont {Dehaene},\ and\ \citenamefont {Moor}}]{Frank_Verstraete_2001}%
  \BibitemOpen
  \bibfield  {author} {\bibinfo {author} {\bibfnamefont {F.}~\bibnamefont {Verstraete}}, \bibinfo {author} {\bibfnamefont {K.}~\bibnamefont {Audenaert}}, \bibinfo {author} {\bibfnamefont {J.}~\bibnamefont {Dehaene}},\ and\ \bibinfo {author} {\bibfnamefont {B.~D.}\ \bibnamefont {Moor}},\ }\bibfield  {title} {\bibinfo {title} {A comparison of the entanglement measures negativity and concurrence},\ }\href {https://doi.org/10.1088/0305-4470/34/47/329} {\bibfield  {journal} {\bibinfo  {journal} {Journal of Physics A: Mathematical and General}\ }\textbf {\bibinfo {volume} {34}},\ \bibinfo {pages} {10327} (\bibinfo {year} {2001})}\BibitemShut {NoStop}%
\bibitem [{\citenamefont {Ollivier}\ and\ \citenamefont {Zurek}(2001)}]{PhysRevLett.88.017901}%
  \BibitemOpen
  \bibfield  {author} {\bibinfo {author} {\bibfnamefont {H.}~\bibnamefont {Ollivier}}\ and\ \bibinfo {author} {\bibfnamefont {W.~H.}\ \bibnamefont {Zurek}},\ }\bibfield  {title} {\bibinfo {title} {Quantum discord: A measure of the quantumness of correlations},\ }\href {https://doi.org/10.1103/PhysRevLett.88.017901} {\bibfield  {journal} {\bibinfo  {journal} {Phys. Rev. Lett.}\ }\textbf {\bibinfo {volume} {88}},\ \bibinfo {pages} {017901} (\bibinfo {year} {2001})}\BibitemShut {NoStop}%
\bibitem [{\citenamefont {Ali}\ \emph {et~al.}(2010)\citenamefont {Ali}, \citenamefont {Rau},\ and\ \citenamefont {Alber}}]{PhysRevA.81.042105}%
  \BibitemOpen
  \bibfield  {author} {\bibinfo {author} {\bibfnamefont {M.}~\bibnamefont {Ali}}, \bibinfo {author} {\bibfnamefont {A.~R.~P.}\ \bibnamefont {Rau}},\ and\ \bibinfo {author} {\bibfnamefont {G.}~\bibnamefont {Alber}},\ }\bibfield  {title} {\bibinfo {title} {Quantum discord for two-qubit $x$ states},\ }\href {https://doi.org/10.1103/PhysRevA.81.042105} {\bibfield  {journal} {\bibinfo  {journal} {Phys. Rev. A}\ }\textbf {\bibinfo {volume} {81}},\ \bibinfo {pages} {042105} (\bibinfo {year} {2010})}\BibitemShut {NoStop}%
\bibitem [{\citenamefont {Luo}(2008)}]{PhysRevA.77.042303}%
  \BibitemOpen
  \bibfield  {author} {\bibinfo {author} {\bibfnamefont {S.}~\bibnamefont {Luo}},\ }\bibfield  {title} {\bibinfo {title} {Quantum discord for two-qubit systems},\ }\href {https://doi.org/10.1103/PhysRevA.77.042303} {\bibfield  {journal} {\bibinfo  {journal} {Phys. Rev. A}\ }\textbf {\bibinfo {volume} {77}},\ \bibinfo {pages} {042303} (\bibinfo {year} {2008})}\BibitemShut {NoStop}%
\end{thebibliography}%

\onecolumngrid
\appendix
\section{Proof of optimum energy teleportation\label{sec:A}}
On taking the first derivative of $F(t)$ and setting it to zero, we get the critical point $t=t_0$ for $F(t)$, thus
\begin{align}
    F'(t)&=\big(B\sin(t)+\alpha\cos(t)\big)\frac{\sinh(\beta B)}{Z}+\big(\alpha\sin(t)-B\cos(t)\big)\frac{\sinh(\beta \alpha)}{Z}=0\notag\\
    &\hspace{1.5cm}\implies\tan(t_0)=\frac{B\sinh(\beta \alpha)-\alpha\sinh(\beta B)}{B\sinh(\beta B) + \alpha\sinh(\beta \alpha)}=:\frac{q}{p}\label{eq:A1}
\end{align}
Moreover, we see that the second derivative of $F(t)$ given by
\begin{align}
    F''(t)=\big(B\cos(t)-\alpha\sin(t)\big)\frac{\sinh(\beta B)}{Z}+\big(\alpha\cos(t)+B\sin(t)\big)\frac{\sinh(\beta \alpha)}{Z}\label{eq:A2}
\end{align}
We will show that at $t=t_0$, the second derivative of $F(t)>0$, thus it concludes that indeed the critical point $t=t_0$ is the point of minimum for $F(t)$.
Notice that if $\tan(t_0)=q/p$ which immediately follows that $\cos(t_0)=\frac{p}{\sqrt{p^2+q^2}}$ and $\sin(t_0)=\frac{q}{\sqrt{p^2+q^2}}$. On collecting the like terms in equation \eqref{eq:A2}, we can rewrite $F''(t)$ as
\begin{align}
    F''(t)&=\bigg(B\frac{\sinh(\beta B)}{Z}+\alpha\frac{\sinh(\beta \alpha)}{Z}\bigg)\cos(t)+\bigg(B\frac{\sinh(\beta \alpha)}{Z}-\alpha\frac{\sinh(\beta B)}{Z}\bigg)\sin(t)\notag\\
    &\implies F''(t_0)=p\cos(t_0)+q\sin(t_0)=\frac{1}{\sqrt{p^2+q^2}}[p^2+q^2]=\sqrt{p^2+q^2}>0\label{eq:A3}
\end{align}
In the above equation, we have shown that at $t=t_0$, indeed we achieve the point of minimum for $F(t)=E_B - E_A = p(1 - \cos(t_0)) - q\sin(t_0)$. Now we will show that $F(t_0)<0$ which immediately implies $|F(t_0)|$ is the maximum amount of energy that can be extracted by Bob.
Since, 
\begin{align}
    F(t)&= p(1 - \cos(t_0)) -q\sin(t_0)=p\bigg(1-\frac{p}{\sqrt{p^2+q^2}}\bigg)-\frac{q^2}{\sqrt{p^2+q^2}}\notag\\
    &\hspace{1.7cm}=p-\frac{p^2+q^2}{\sqrt{p^2+q^2}}=p-\sqrt{p^2+q^2}<0\label{eq:A4}
\end{align}
\section{Derivation of optimum L.O.C.C\label{sec:B}}
Any local unitary operation by Bob is just a rotation of spin in the $\text{SU}(2)$ manifold. For any measurement outcome by Alice, i.e., $k=\{\pm 1\}$, the general local unitary operation by Bob can be written as $$U_B(\theta,k,\hat{n})=\cos(\theta)\mathbf{I}+ik(\hat{n}\cdot \Vec{\sigma})\sin(\theta)$$
where $\hat{n}=(n_1,n_2,n_3)$ and $\Vec{\sigma}=(\sigma_x,\sigma_y,\sigma_z)$ are unit vector and Pauli matrices respectively. So we can interpret $U_B(\theta,k,\hat{n})$ as rotation of spin by an angle $\theta$ around the axis $\hat{n}$. Now we will search for an optimum axis of rotation  i.e., $\hat{n}$ such that Bob can extract maximum energy from the QET protocol. We denote all the states from equation \eqref{eq:9}-\eqref{eq:12} after Bob's action by $\ket{\Psi_{B,n}(k,\hat{n})}:=U_B(\theta,\hat{n})\ket{\Psi_{A,n}(k)}$. To proceed further, we have list out all the states after the L.O.C.C. below:

\begin{align}
    \ket{\Psi_{B,00}(k,\hat{n})}&=\frac{(\cos(\theta)-ikn_3\sin(\theta))}{2}\ket{e_{00}}+i\frac{(n_1-in_2)\sin(\theta)}{2}\ket{e_{11}}+\notag\\
    &+\frac{(k\cos(\theta)+ik(n_1-in_2)\sin(\theta)-in_3\sin(\theta))}{2\sqrt{2}}\ket{e_{+}}\notag\\
    &-\frac{(k\cos(\theta)-ik(n_1-in_2)\sin(\theta)-in_3\sin(\theta))}{2\sqrt{2}}\ket{e_{-}}\label{eq:B1}\\
    \ket{\Psi_{B,11}(k,\hat{n})}&=\frac{(\cos(\theta)+ikn_3\sin(\theta))}{2}\ket{e_{11}}+i\frac{(n_1+in_2)\sin(\theta)}{2}\ket{e_{00}}+\notag\\
    &+\frac{(k\cos(\theta)+ik(n_1+in_2)\sin(\theta)+in_3\sin(\theta))}{2\sqrt{2}}\ket{e_{+}}\notag\\
    &+\frac{(k\cos(\theta)-ik(n_1+in_2)\sin(\theta)+in_3\sin(\theta))}{2\sqrt{2}}\ket{e_{-}}\label{eq:B2}\\
    \ket{\Psi_{B,+}(k,\hat{n})}&=\frac{(\cos(\theta)+in_1\sin(\theta)}{2}\ket{e_{+}}+\frac{i(kn_3-in_2)\sin(\theta)}{2}\ket{e_{-}}+\notag\\
    &+\frac{(k\cos(\theta)+ikn_1\sin(\theta)+i(ikn_2-n_3)\sin(\theta))}{2\sqrt{2}}\ket{e_{00}}\notag\\
    &\frac{(k\cos(\theta)+ikn_1\sin(\theta)-i(ikn_2-n_3)\sin(\theta))}{2\sqrt{2}}\ket{e_{11}}\label{eq:B3}\\
    \ket{\Psi_{B,-}(k,\hat{n})}&=\frac{(\cos(\theta)-in_1\sin(\theta)}{2}\ket{e_{-}}+\frac{i(kn_3+in_2)\sin(\theta)}{2}\ket{e_{+}}+\notag\\
    &-\frac{(k\cos(\theta)+kn_2\sin(\theta)-i(kn_1+n_3)\sin(\theta))}{2\sqrt{2}}\ket{e_{00}}\notag\\
    &+\frac{(k\cos(\theta)-kn_2\sin(\theta)+i(n_3-kn_1)\sin(\theta))}{2\sqrt{2}}\ket{e_{11}}\label{eq:B4}
\end{align}
Now using the above four equations, we calculate $\braket{\ket{\Psi_{B,n}(k}|\bm{H}|\ket{\Psi_{B,n}(k}}$ to calculate the energy due to Bob's action i.e., $E_B=\sum_{k\in\{\pm 1\}}\sum_np_n\braket{\ket{\Psi_{B,n}(k}|\bm{H}|\ket{\Psi_{B,n}(k}}$. Thus

\begin{align}
   \braket{\ket{\Psi_{B,00}(k}|\bm{H}|\ket{\Psi_{B,00}(k}} =&-\braket{\ket{\Psi_{B,11}(k}|\bm{H}|\ket{\Psi_{B,11}(k}}=\frac{B}{4}[(n_1^2+n_2^2-n_3^2)\sin^2(\theta)-\cos^2(\theta)]\notag\\
   &\hspace{2cm}+\frac{\alpha}{4}[n_2\sin(2\theta)-2kn_1n_3\sin^2(\theta)]\label{eq:B5}\\
   \braket{\ket{\Psi_{B,+}(k}|\bm{H}|\ket{\Psi_{B,+}(k}}=&-\braket{\ket{\Psi_{B,-}(k}|\bm{H}|\ket{\Psi_{B,-}(k}}=\frac{\alpha}{4}[(n_1^2-n_2^2-n_3^2)\sin^2(\theta)+\cos^2(\theta)]\notag\\
   &\hspace{2cm}+\frac{B}{4}[n_2\sin(2\theta)+2kn_1n_3\sin^2(\theta)]\label{eq:B6}
\end{align}

For notational simplicity and to avoid complexities, we denote $l':=-\frac{B}{4}[(n_1^2+n_2^2-n_3^2)\sin^2(\theta)-\cos^2(\theta)]-\frac{\alpha n_2}{4}\sin(2\theta)$ and $m'=\frac{\alpha}{4}[(n_1^2-n_2^2-n_3^2)\sin^2(\theta)+\cos^2(\theta)]+\frac{Bn_2}{4}\sin(2\theta)$ from equations \eqref{eq:B5}-\eqref{eq:B6}. Therefore, energy through Bob's action is given by 
$E_B(\theta,\hat{n})=-4l'\frac{\sinh(\beta B)}{Z}-4m'\frac{\sinh(\beta \alpha)}{Z}$ in which summation over linear $k$ vanishes. Moreover, if we denote $\Delta E_{tel}(\theta,\hat{n}):=E_B(\theta,\hat{n})-E_A$ then Bob can extract $|\Delta E_{tel}(\theta,\hat{n})|$ amount of energy. Since $\hat{n}$ and $\theta$ are independent variables, optimization of $\Delta E_{tel}(\theta,\hat{n})$ over $\hat{n},\theta$ is equivalent to optimization of $\Delta E_{tel}(\theta,\hat{n})$ over $\hat{n}$ followed by $\theta$. We minimize
\begin{align}
    \Delta E_{tel}(\theta,\hat{n})=(-4l'+B)\frac{\sinh(\beta B)}{Z}+(-4m'+\alpha)\frac{\sinh(\beta \alpha)}{Z}\label{eq:B7}
\end{align}
subject to the constraint $C:=\{(n_1,n_2,n_3)\in \RR^3: n_1^2+n_2^2+n_3^2=1\}$
\subsection{Analyzing the Critical Points}
Suppose we denote $A=\frac{\sinh(\beta B)}{Z}, K=\frac{\sinh(\beta \alpha)}{Z}, c=1-\cos(t), s=\sin(t)$ and $t=2\theta$. The Lagrangian for the objective function along with constraints is given by:
$$\mathcal{L}(\hat{n}, \lambda) = \frac{BAc}{2} (1+n_1^2+n_2^2-n_3^2) + \frac{\alpha Kc}{2} (1-n_1^2+n_2^2+n_3^2) + \alpha n_2 As - Bn_2 Ks + \lambda (n_1^2 + n_2^2 + n_3^2 - 1)$$
On taking the partial derivatives of $\mathcal{L}$ with respect to $n_1$, $n_2$, $n_3$, and $\lambda$, and set them to zero, we get

\begin{align}
\frac{\partial \mathcal{L}}{\partial n_1} =& BAc n_1 - \alpha Kc n_1 + 2 \lambda n_1 = 0\label{eq:B8}\\
\frac{\partial \mathcal{L}}{\partial n_2} =& BAc n_2 + \alpha Kc n_2 + \alpha As - B Ks + 2 \lambda n_2 = 0\label{eq:B9}\\
\frac{\partial \mathcal{L}}{\partial n_3} =& -BAc n_3 + \alpha Kc n_3 + 2 \lambda n_3 = 0\label{eq:B10}\\
\frac{\partial \mathcal{L}}{\partial \lambda} =& n_1^2 + n_2^2 + n_3^2 - 1 = 0\label{eq:B11}
\end{align}
From the equation \eqref{eq:B8} and \eqref{eq:B10}, we found
$n_1 (BAc - \alpha Kc + 2 \lambda) = 0$
 and $n_3 (-BAc + \alpha Kc + 2 \lambda) = 0$ respectively. We have the following two cases:\\
 Case 1: $n_1 = 0$, $n_3 = 0$ therefore $n_2=\pm 1$\\
 Case 2: $n_1 \neq 0$ or $n_3 \neq 0$\\
In this case, we need to solve the equations $BAc - \alpha Kc + 2 \lambda = 0$
and $-BAc + \alpha Kc + 2 \lambda = 0$. Thus on adding these equations, we get:
$\lambda=0$. Thus, on substitution of $\lambda = 0$ back into the equations \eqref{eq:B8} and \eqref{eq:B10}, we get
$\boxed{BA = \alpha K}$. On the contrary we already have $B<\alpha\implies A<K\implies BA<\alpha K$. Therefore, there is no non-zero solution for $n_1,n_3$.
Now we have two critical points i.e., $\hat{n}\in\{(0,1,0),(0,-1,0)\}$. To find the critical point of absolute minimum for equation \eqref{eq:B7}, we need to evaluate $\Delta E_{tel}(\theta,\hat{n})$ for these critical points and find the minimum.
Let's plug in $\hat{n}=(0,1,0), (0,-1,0)$ and evaluate. We found:
\begin{align}
    \Delta E_{tel}(\theta,(0,1,0))&= (BA+  \alpha K)c - (BK-\alpha  A )s\label{eq:B12}\\
    \Delta E_{tel}(\theta,(0,-1,0))&= (BA+  \alpha K)c + (BK-\alpha  A )s\label{eq:B13}
\end{align}
As hyperbolic function $K>A$, clearly we observe that  $\Delta E_{tel}(\theta,(0,1,0))<\Delta E_{tel}(\theta,(0,-1,0))$ which implies $\hat{n}=(0,1,0)$ is the point of minimum. Thus $U_{B}(\theta,k):=U_B(\theta,k,(0,1,0))$ and $m:=m', l:=l'$. Notice that if $B>\alpha$ then equation \eqref{eq:B13} would be the minimum.

\end{document}